\documentclass[journal]{IEEEtran}   
\usepackage{multirow} 
\usepackage{diagbox} 
\usepackage{amssymb} 
\usepackage{amsmath} 
\usepackage{graphicx} 
\usepackage{cite} 
\usepackage{citesort}
\usepackage{balance} 
\usepackage[utf8]{inputenc} 
\usepackage{subcaption} 
\usepackage{epstopdf} 
\usepackage{comment} 
\usepackage{url} 
\usepackage{xcolor} 
\usepackage{amsthm} 
\usepackage{mdframed} 
\usepackage{algorithm} 
\usepackage{algpseudocode} 
\usepackage{bm} 
\usepackage{booktabs} 

\newtheorem{theorem}{Theorem} 
\newtheorem{corollary}{Corollary} 
\newtheorem{remark}{Remark} 

\begin{document}
\title{\Huge{Performance Analysis of Pinching Antenna Systems Enabled  NOMA Communications}}

\author{ 
Xinwei~Yue,~\IEEEmembership{Senior Member,~IEEE}, Xinglun Tao, Jingjing Zhao,~\IEEEmembership{Senior Member,~IEEE}, Xianfu Lei, ~\IEEEmembership{Member,~IEEE}, Yuanwei\ Liu,~\IEEEmembership{Fellow, IEEE}, Zhiguo\ Ding,~\IEEEmembership{Fellow, IEEE}

\thanks{X. Yue and X. Tao are with the Center for Target Cognition Information Processing Science and Technology, Beijing Information Science $\&$ Technology University,and also with the Key Laboratory of Modern Measurement $\&$ Control Technology, Ministry of Education, Beijing Information Science $\&$ Technology University, Beijing 102206, China (email: \{xinwei.yue and xinglun.tao\}@bistu.edu.cn).}
\thanks{J. Zhao is with the School of Electronic and Information Engineering, Beihang University, Beijing 100191, China (email: jingjingzhao@buaa.edu.cn).}
\thanks{X. Lei is with the School of Information Science and Technology, Southwest Jiaotong University, Chengdu 611756, China (e-mail: xflei@swjtu.edu.cn).}
\thanks{Y. Liu is with the Department of Electrical and Electronic Engineering, The University of Hong Kong, Pokfulam, Hong Kong (email: yuanwei@hku.hk).}
\thanks{Z. Ding is with the School of Electrical and Electronic Engineering (EEE), Nanyang Technological University, Singapore 639798, (e-mail: zhiguo.ding@ntu.edu.sg).}
}
    
\maketitle

\begin{abstract}
Pinching antenna systems (PASS) have the advantages in the perspective of flexible antenna reconfiguration, line-of-sight (LoS) creation, and scalability features. To highlight the ascendancy of PASS, we survey the integration of PASS into non-orthogonal multiple access (NOMA) networks. The locations of nodes are randomly distributed within a circular coverage region. The influencing factors of line-of-sight (LoS) and non-line-of-sight (NLoS) propagation links from PASS to non-orthogonal nodes are taken into considered.  
To characterize performance of PASS-NOMA, we deduce the blockage probability and ergodic data rates expressions of two nodes over LoS/NLoS fading channels. In light of these theoretical results, the infinite diversity gain are also analyzed with near node $n$ under non-ideal successive interference cancellation (NISIC) and far node $f$ over LoS links. The slopes of ergodic data rate for node $n$ with NISIC and node $f$ were equal to zeros. In addition, the PASS-NOMA system throughput are evaluated in different transmission modes. 
It is shown from the numerical results that: 1) The blockage outage behaviors of PASS-NOMA networks with LoS/NLoS conditions outperform that of PASS aided traditional orthogonal multiple access (OMA); 2)The employment of PASS enables the larger ergodic data rates relative to PASS-OMA networks; and 3) As the quantity of pinching antennas rises, the performance of PASS-NOMA networks are enhanced over LoS/NLoS propagation links.
\end{abstract}
\begin{keywords}
Pinching antenna systems, non-orthogonal multiple access, blockage probability, ergodic data rate, stochastic geometry.
\end{keywords}

\section{Introduction}
The future sixth-generation communication techniques are anticipated to enable intelligent service awareness in terms of wide coverage, ultra-high data rate and allow users to connect each other anywhere \cite{Wang2023Roadto6G,Chen2023Toward6G}. To fulfill these ambitious goals, researchers have sought revolutionary paradigm, i.e., flexible antenna techniques \cite{Yuanwei2021RIS,Wong2021FAS,Zhu2024MAOpportunity}, which has ability to overcome the physical limitations of conventional wireless infrastructures in terms of reconfiguring the wireless channel and boosting its performance. Pinching antenna systems (PASS) have been regarded as one of key technologies in flexible antenna systems and its performance advantages were verified in \cite{NTT2022PinchingAntenna2022,Liu2025OutlookPASS}. The basic architecture of PASS is composed of dielectric waveguides and specialized dielectric particles, in which electromagnetic waves can be radiated out or collected through the pinched particles \cite{NTT2022PinchingAntenna}. The dielectric waveguide can support one or more activated radiation/collection points for transmitting or receiving the desired signals.

The integration of PASS into wireless communication networks was firstly studied in \cite{Ding2025FASsPASS}, demonstrating its validity in alleviating the impact of large-scale path loss on communication performance. A comprehensive tutorial on PASS was provided by systematically detailing its physical structure and control mechanisms \cite{Liu2025PASSTutorial}. To emphasize the implementation flexibility, the authors of \cite{Yang2025PAPrinciples} presented a broad overview of PASS principles and its application directions. An investigation into the array gain achievable by PASS configurations was presented in \cite{Ouyang2025ArrayGainPASS}, which also demonstrates that strategically activating pinching points can boost signal strength. In \cite{Wang2025ModelingPASS}, the beamforming optimization strategies for PASS were evaluated with consideration of the underlying hardware physics, addressing practical deployment concerns. Both distributed and centralized PASS deployment schemes were discussed in \cite{Xidong2025PASSCommunications}, where the distributed PASS deployment is capable of achieving maximum spectral efficiency. The performance analyses of PASS were highlighted with a focus on blockage probability and average rate by taking account of line-of-sight (LoS) propagation links \cite{Tyrovolas2025PerformancePASS}. 
Moreover,  the downlink data rates of PASS were maximized through optimization of the pinching antenna location \cite{Ding2025PASSdownlink}. Under realistic channel conditions, the uplink transmission performance of PASS was examined in \cite{Hou2025PerformancePASS}, considering both blockage behavior and sum rates. Furthermore, the minimum data rates of uplink PASS were improved through dynamically adjust the placement of pinching antenna \cite{Tegos2025MinimumPASS}. Different from the above works, the results in \cite{Yanqing2025PASSPerformanceAnalysis} confirmed that the in-waveguide signal attenuation of PASS has a minor influence on users' achievable throughput. Beyond performance characterization, recent efforts in \cite{Zhao2025PASSMUC} have also explored diverse transmission architectures for multi-user PASS, with the optimizing of joint beamforming and signal processing.

The above studies have significantly expanded our understanding of PASS architectures and their operational principles. The combinations of PASS with other wireless technologies have been furnished in \cite{Shan2025PASSMulticast,Yixuan2025PASSWPC,Jiang2025PASSCovert,Yaoyu2025PASSPositioning}. The PASS enabled multicast framework was developed in \cite{Shan2025PASSMulticast}, where the multicast rate is maximized in the cases of single dielectric waveguide and multiple dielectric waveguides. With the emphasis on green communications \cite{Yixuan2025PASSWPC}, the sum rate of PASS aided wireless powered communications was studied by invoking nonlinear energy harvesting model. For secure transmission requirement, the weighted secrecy throughput of PASS was evaluated in \cite{Mingjun2025PASSPLS} under the presence of multiple eavesdroppers.   
As a development, the authors of \cite{Jiang2025PASSCovert} shed light on how to implement PASS to enhance the covert performance. In \cite{Yaoyu2025PASSPositioning}, the indoor positioning performance achieved by PASS was revealed, in which the PASS is applied to estimate the users' multidimensional coordinates. Applying deep learning approaches, the authors of \cite{Xu2025PASSOptimization} explored the joint transmit and antenna placement of PASS enabled multi-user systems. A novel PASS assisted environment multiple access approach was discussed in \cite{Zhiguo2025PASSEDMA}, which effectively restrain interference by using signal blockages while enhancing the signal strength at intended users. In addition, the authors of \cite{Zhiguo2025PASSISAC} further analyzed the Caram\'{e}r-Rao bound of PASS based integrated sensing and communications.

The growing research interests have been dedicated to examine flexible antenna techniques assisted 
non-orthogonal multiple access (NOMA) communications, which serves as one of the pivotal enablers for next-generation wireless communication \cite{ITU2023NOMA,Liu2024NGMA50Year}. In \cite{Ding2020irsnoma}, the authors provided a straightforward design of reconfigurable intelligent surface (RIS) aided NOMA networks using simple control methods. The user fairness and sum rate of NOMA were maximized via the assistance of RIS partition \cite{Basar2022NOMARIS}. To solve the multiplicative fading problems, the authors of \cite{Yue2024ARISNOMA} introduced active RIS into NOMA networks, accounting for hardware impairments. Moving forward a single step, coverage area maximization in NOMA networks enabled by simultaneously transmitting and reflecting surfaces (STARS) was discussed in \cite{Xidong2022STARSNOMA}. On this basis, the application of active STARS into NOMA networks was furnished in \cite{Yue2024ASTARSNOMA}, where a detailed analysis of the blockage behaviors and system throughput was presented. Driven by these trends, the authors of \cite{Kiat2024FASNOMA} applied the inimitable ability of fluid antenna system (FAS) to NOMA networks, which reveals how it enhances network performance relative to traditional antenna system. In \cite{Farshad2025FASNOMAISAC}, the integrated backscatter and sensing performance of FAS-NOMA was evaluated by making use of Gaussian copula. Leveraging the fluid paradigm, the authors of \cite{Yu2025FASSTARSNOMA} investigated the weighted sum rate of fluid STARS aided NOMA networks. 
As another parallel study, the effective throughput of movable antenna (MA) enabled NOMA short-packet transmission was studied in \cite{Qingqing2024MovableNOMA}. Considering hippopotamus optimization, the authors of \cite{Zhenyu2025MovableNOMA} investigated achievable rate maximization in MA-NOMA networks.

Building upon its integration with flexible antennas, recent efforts have extended this trend toward PASS to enable more adaptive and spatially flexible user access \cite{Ding2025FASsPASS,Qiao2025PASSNOMAorOMA,Chen2025PASSOMANOMA}. In \cite{Ding2025FASsPASS}, the utilization of PASS into NOMA networks was studied for the first time to verify the advantage of PASS in enhancing the ergodic sum rate. To highlight the benefits of non-orthogonal signaling, the authors of \cite{Qiao2025PASSNOMAorOMA} showed that PASS-NOMA networks achieve significantly larger communication coverage than their OMA-based counterparts. 
Under the special condition that the user and pinching antenna are kept in parallel positions, the authors in \cite{Chen2025PASSOMANOMA} assessed the LoS user's blockage performance for PASS-NOMA networks. An efficient power allocation mechanism for PASS-NOMA was developed in \cite{Saeed2025NOMAPASS} to minimize the total transmit power while satisfying individual users’ quality-of-service requirements. 
Moreover, the authors of \cite{WangKaidi2025PASSNOMA} focused on maximizing the sum rate in PASS-NOMA by jointly optimizing the locations and number of activated pinching antennas. Based on this work, the system energy consumption of PASS-NOMA was analyzed in \cite{Fu2025NOMAPASS} by taking into consideration multiple dielectric waveguides. With a focus on user fairness, the authors in \cite{Xu2025QoSNOMAPASS} further studied the resource allocation problem of PASS-NOMA by optimizing excitation point selection. A new PASS-NOMA framework called waveguide division multiple access scheme was recently proposed in \cite{Zhao2025PASSWDMA}, which assigns dedicated dielectric waveguides to individual users with the pinching beamforming.

\subsection{Motivations and Contributions}
Despite the essential foundation laid by the above works for comprehending PASS in wireless communications, the treatises for integrating PASS and NOMA technologies are far from being well understood.
To be more specific, the authors in \cite{Ding2025FASsPASS} only evaluated the sum rate of PASS-NOMA, while how does LoS blockage affect system performance remains unclear. Immediately afterwards, the blockage blockage behaviors of PASS-NOMA were effectively quantified under the special condition of keeping the node and pinching antenna in parallel positions \cite{Chen2025PASSOMANOMA}.
Building on these contributions, we interpreted the performance of PASS-NOMA through the metrics including LoS/NLoS blockage probability, ergodic data rate, and system throughput, with a couple of non-orthogonal nodes placed within a circular communication region using stochastic geometry. The impact of cell radius, NLoS links and non-ideal successive interference cancellation (NISIC) on the network performance of PASS-NOMA was highlighted from the perspective of realistic requirement.   
Based on the above works, the core contributions of this paper are listed below:
\begin{enumerate}
  \item  We first outline a PASS-enabled NOMA communication framework, in which two destination nodes are stochastically distributed inside a circular coverage service area. We analyses the LoS blockage performance of nodes in PASS-NOMA networks by incorporating ideal successive interference cancellation (ISIC) and NISIC conditions, and further give the insight on the NLoS blockage performance for the distant node.
  \item We deduce the closed-form formulations of LoS/NLoS blockage probability and ergodic data rates under ISIC/NISIC conditions for both destination nodes. As the quantity of pinching antennas rises, the blockage probability and ergodic performance are enhanced at a certain extent.       
  \item We further acquire the diversity gain of destination nodes with LoS/NLoS propagation links under high signal-to-noise ratio (SNR)  conditions base on these approximated results, observing that the diversity gain of the near node under ISIC is infinite, while it is zero under NISIC condition. Drawing on the analytical findings, the high SNR slopes of near node with ISIC and distant node with LoS/NLoS links are one and zeros, respectively.      
  \item  We additionally assess the system throughput of PASS-NOMA networks under transmissions with delay-constrained and latency-tolerant. We also characterize the impact of the cell coverage radius on throughput performance. Our results show that PASS-NOMA achieves enhanced system throughput in small-cell deployments.
\end{enumerate}
\subsection{Organization and Notations}
The subsequent structure of this paper is briefly outlined. The second section elaborates on the system model of PASS-NOMA networks. In the third section, we derive analytical formulations for the LoS/NLoS blockage probability of nodes. 
The ergodic data rates of PASS-NOMA networks are assessed in the fourth section. Subsequently, numerical results and performance insights are presented in fifth section. The final section concludes the main works, which also outlines potential directions for future research. The associated proof processes are presented in detail to verify the conclusion.

The main notations used in this paper are summarized below. For a random variable $X$, ${f_X}\left(  \cdot  \right)$ and ${F_X}\left(  \cdot  \right)$ represent its probability density function (PDF) and cumulative distribution function (CDF), respectively. The operator $\mathbb{E}\{\cdot\}$ stands for mathematical expectation, and $|| \cdot ||$ denotes the Euclidean norm (2-norm) of a vector or matrix.


\section{System Model}\label{System Model}
\subsection{Description of Network Architecture}
Focusing our attention on a PASS enabled downlink NOMA communication scenarios given in Fig. \ref{System_Model}, where the nodes' position
are stochastically distributed in a circular area ${\cal D}$ of radius ${ R_{{\cal D}}}$. 
More specifically, the transmitting end equipped with PASS sends the superimposed messages to $N$ single antenna destination node, in which the PASS has one dielectric waveguide and $K$ pinching antennas\footnote{It is noteworthy to mention that considering multiple dielectric waveguides and pinching antennas can further enrich the channel gains of non-orthogonal nodes, which will be part of our future research.}. 
$ K $  pinching antennas are simultaneously activated and fed with the identical signal due to the shared-waveguide constraint \cite{Saeed2025NOMAPASS}. Given the intractable coherent superposition across $ K $ spatially distributed antennas, we adopt a representative antenna approximation firstly used in performance studies of pinching antenna systems \cite{Ding2025FASsPASS} for the purpose of analytical tractability. Specifically, the effective channel is modeled as a single reference pinching antenna. 
The PASS is set parallel to the horizontal plane, at a height of $d$ above the ground.
It can be easily adjusted to establish the LoS communication area in the surrounding zone surround the reference pinching antenna. Without restricting generality, a couple of destination nodes i.e., near node $n$ and distant node $f$ are taken into consideration, where node $n$ is randomly located within the circular area of radius $R_n$. 
Concurrently, node $f$ is uniformly distributed in the circular area denoted by ${ R_{f}}$. At this moment, the location of node $n$ is closer to the origin than that of node $f$, i.e., ${ R_{n}}<{ R_{f}} \le{ R_{{\cal D}}}$. To obtain straightforward results, we postulate that the BS has the destination nodes'ideal channel state information. Finally, since node $f$ is far from the BS, it is particularly vulnerable to LoS blockages compared to nodes in closer proximity. The performance of node $f$ under NLoS channel conditions in PASS-NOMA networks will therefore also be studied in the following section.
\begin{figure}
  \centering
  \includegraphics[width= 3.48in, height=2.0in]{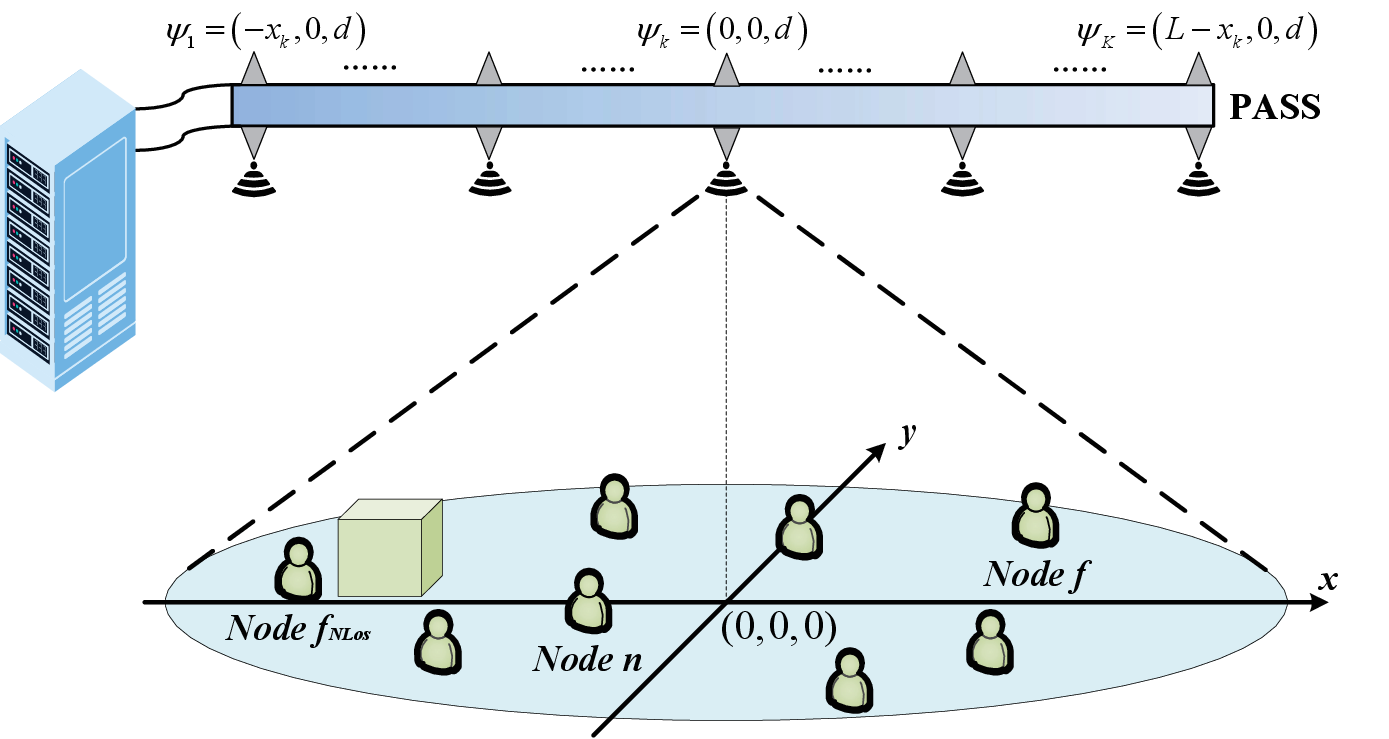}
  \caption{A schematic diagram of PASS based NOMA communication networks with spatially random nodes.}
  \label{System_Model}
\end{figure}

\subsection{Signal Model}
In the PASS-NOMA networks, the transmitting end sends the superposed signal via all $K$ activated pinching antennas on waveguide to a coupe of non-aligned nodes. The received signal of node $\varphi$, i.e., $\varphi  \in \{ n,f\} $ can be given by
\begin{align}\label{received signal of user varphi}
y_\varphi = \left( \sum_{k=1}^K h_{\varphi,k} \phi _{k} \right) \left( \sqrt{\frac{a_n P_b}{K}} s_n + \sqrt{\frac{a_f P_b}{K}} s_f \right) + w_\varphi,
\end{align}
where ${s_\varphi }$ denotes the energy normalized signal of node $\varphi$ with $\mathbb{E}\{x_\varphi^2 \}= 1$. 
${a_n}$ and ${a_f}$ stand for the power assignment parameters for node $n$ and node $f$, individually, with the condition of ${a_n} < {a_f}$. 
$P_b$ represents the transmitting end's power and $w_{\varphi} \sim  {\cal C}{\cal N} \left(0, \sigma^2 \right)$ denotes the additive Gaussian white noise at node $\varphi$. 
Assuming that the transmitting end's power is homogeneously assigned to the $K$ pinching antennas, the transmit power of one of the activated antennas can then be expressed as $\frac{{{P_b}}}{K}$. $h_{\varphi,k}$ denotes the propagation channel link from the $k$-th pinching antenna to node $ \varphi $. 
Due to the random placement of antennas along the waveguide, the distribution of $\sum_{k=1}^K h_{\varphi,k} {\phi _{k}}$ is analytically intractable. Following the common practice in \cite{Ding2025FASsPASS}, we approximate the $k$-th pinching antenna as the representative antenna by ${h_\varphi } \approx \frac{{ \sqrt{\eta} {\tilde{h}}_\varphi  {e^{ - j\frac{{2\pi }}{\lambda }\left\| {{{\bf{u}}_\varphi } - {{\boldsymbol{\psi }}_k}} \right\|}}}}{\left\| {{\bf{u}}_\varphi } - {{\boldsymbol{\psi }}_k} \right\|^{\frac{\alpha}{2}}}$. 
It will be changed into LoS channel with setting parameters ${\tilde{h}}_\varphi=1$ and $\alpha=2$. $\alpha$ denotes the path loss factor. This approximation preserves the key impact of path loss, blockage, and power scaling with $K$, while enabling closed-form analysis. $\eta  = \frac{{{c^2}}}{{16{\pi ^2}f_c^2}}$, $c$ and $f_c$ stand for the speed of light and carrier frequency, correspondingly. 
$\boldsymbol{\psi}_k = \left(0,0,d \right)$ is the location of $k$-th pinching antenna, and $\mathbf{u}_{\varphi} = \left( x_{\varphi}, y_{\varphi}, 0 \right)$ denotes the random location of node $\varphi$ in the desired communication area, which is uniformly distributed in the circular communication area $\mathcal{D} = \{(x,y,0) : x^2 + y^2 \leq R_{\varphi}^2\}$. 
The phase shift i.e., ${\phi _{k}} = {e^{-j\frac{{2\pi {x_k}}}{{{\lambda _g}}}}}$  of acquired signal transmitted via the waveguide from the $k$-th pinching antenna, where $x_k$ represents the propagation distance from the $k$-th pinching antenna to the transmitting,  
and $\lambda_g = \frac{\lambda}{n_{eff}}$ is the wavelength of the signal transmitted, with $n_{eff}$ being the effective refractive index of PASS.

By employing SIC scheme, node $n$ first detect node $f$'s signal $s_f$ and its signal-to-interference-plus-noise ratio (SINR) can be described as
\begin{align}\label{SINR n to f}
{\gamma _{n \to f}} = \frac{{\rho {{\left| {{h_f}{\phi _k}} \right|}^2}{a_f}}}{{\rho {{\left| {{h_n}{\phi _k}} \right|}^2}{a_n} + 1}},
\end{align}
wher $\rho  = \frac{{{P_b}}}{{K{\sigma ^2}}}$ denotes the transmit SNR. After removing node $n$'s information using SIC, the SINR received at node $n$ for decoding its own signal $s_n$ is expressed as follows:
\begin{align}\label{SINR n}
{\gamma _n} = \frac{{\rho {{\left| {{h_n}{\phi _k}} \right|}^2}{a_n}}}{{\varpi \rho {{\left| {{h_I}} \right|}^2} + 1}},
\end{align}
where $\varpi$ is the conversion coefficient. To be more specific, when $\varpi$ is set to be $\varpi = 1$, there is  leftover interference from the NISIC carried out at node $n$. On the contrary, $\varpi = 0$ represents that node $n$ makes use of the ISIC to detect its information. ${h_I} \sim {\cal C}{\cal N} (0,{\Omega _I})$ can be modeled as the leftover interference with the NISIC.

In addition, the SINR received at node $f$ to detect the signal $s_f$ by considering $s_n$ as interference can be given by
\begin{align}\label{SINR f}
{\gamma _f} = \frac{{\rho {{\left| {{h_f}{\phi _k}} \right|}^2}{a_f}}}{{\rho {{\left| {{h_f}{\phi _k}} \right|}^2}{a_n} + 1}}.
\end{align}

\section{Blockage Probability}\label{Outage Probability}
In this section, the propagation blockage probability of node $n$ under ISIC/NISIC and that of node $f$ in PASS-NOMA networks is analyzed in depth. The asymptotic blockage probability and diversity gain expressions for node $\varphi$ are achieved at high SNRs to characterize the performance of PASS-NOMA networks.  
\subsection{Blockage Probability of Node $n$}\label{subsection OP of user n}
In accordance with the NOMA protocol, the blockage probability event of node $n$ is defined as follows: 1) If node $n$ fails to decode $s_f$, the probability blockage event occurs; 2) Although the signal $s_f$ is decoded successfully, node $n$ cannot detected its information $s_n$ and the communication may also be interrupted. Based on these explanations, the blockage probability of node $n$ over LoS propagation links in PASS-NOMA networks is usually defined by
\begin{align}\label{OP of user n defined}
{{\rm{P}}_n^{{\rm{LoS}}}} =  {\rm{P}}({\gamma _{n \to f}} > {\gamma _{thf}},{\gamma _n} < {\gamma _{thn}}) +{\rm{P}}({\gamma _{n \to f}} < {\gamma _{thf}}) , 
\end{align}
where 
${\gamma _{thn}} = 2^{{\hat{R}}_n} - 1$ and ${\gamma _{thf}} = 2^{{\hat{R}}_f} - 1$ represent the target SNRs required to decode $s_n$ and $s_f$. The associated target data rates for nodes $n$ and $f$ are given by ${\hat{R}}_n$ and ${\hat{R}}_f$, respectively. The blockage probability formulation of node $n$ under NISIC in PASS-NOMA networks is given in the theorem below.
\begin{theorem}\label{theorem OP expression of user n with ipSIC}
The closed-form blockage probability formulation of node $n$ under NISIC over LoS propagation links can be expressed as
\begin{align}\label{OP of user n with ipSIC}
{\rm{P}}_{n,{\rm{NISIC}}}^{{\rm{LoS}}} = \begin{cases}
1, & {{C_n} < {d^2}}, \\
\begin{aligned}[b] & 1 + {\xi_4} - {\xi_4}{e^{ - \frac{{{\xi _1}}}{{{d^2}}} + \frac{1}{{{\Omega _I}\rho }}}} \\ & - \frac{{{\xi _1}}}{{R_n^2}}{e^{\frac{1}{{{\Omega _I}\rho }}}}{\xi _2}, \end{aligned} & {{d^2} < {C_n} < R_n^2 + {d^2}}, \\
\begin{aligned}[b] & {e^{\frac{1}{{{\Omega _I}\rho }}}}\left[ {\left( {1 + {\xi_4}} \right){e^{ - \frac{{{\xi _1}}}{{(R_n^2 + {d^2})}}}}} \right. \\ & \left. { - {\xi_4}{e^{ - \frac{{{\xi _1}}}{{{d^2}}}}} - \frac{{{\xi _1}}}{{R_n^2}}{\xi _3}} \right], \end{aligned} & {{C_n} > R_n^2 + {d^2}}, \\
\end{cases}
\end{align}
where 
${\xi _1} = \frac{{\eta {a_n}}}{{{\Omega _I}{\gamma _{thn}}}},{\xi _2} = {\rm{Ei}}\left( { - \frac{{{\xi _1}}}{{{d^2}}}} \right) - {\rm{Ei}}\left( { - \frac{1}{{{\Omega _I}\rho }}} \right),{\xi _3} = {\rm{Ei}}\left( { - \frac{{{\xi _1}}}{{{d^2}}}} \right) - {\rm{Ei}}\left( { - \frac{{{\xi _1}}}{{R_n^2 + {d^2}}}} \right),{\xi _4} = \frac{{{d^2}}}{{R_n^2}}$,
$\rm{Ei}\left( \cdot \right)$ is the exponential integral function\cite[Eq. (8.211.1)]{2000gradshteyn}.
\begin{proof}
See the following Appendix~A.~\ref{Appendix:A}
\end{proof}
\end{theorem}
\begin{corollary}\label{corollary OP of user n with pSIC}
When $\varpi = 0$ is satisfied, the closed-form blockage probability expression of node $n$ under ISIC in PASS-NOMA networks can be obtained as
\begin{align}\label{OP of user n with pSIC}
{\rm{P}}_{n,{\rm{ISIC}}}^{{\rm{LoS}}} = \begin{cases}
1, & {C_n} < {d^2}, \\
1 - \frac{{{C_n} - {d^2}}}{{R_n^2}}, & {d^2} < {C_n} < R_n^2 + {d^2}, \\
0, & {C_n} > R_n^2 + {d^2}.
\end{cases}
\end{align}
\end{corollary}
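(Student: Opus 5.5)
The corollary is the special case $\varpi=0$ of Theorem~\ref{theorem OP expression of user n with ipSIC}. Rather than take a limit in the Ei-based expression, I would specialise the derivation behind that theorem directly, since with $\varpi=0$ the residual-interference term vanishes and everything becomes deterministic given the node location.

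Under LoS ($\tilde h_n=1$, $\alpha=2$), $|h_n\phi_k|^2=\eta/(r_n^2+d^2)$, where $r_n^2=x_n^2+y_n^2$. Setting $\varpi=0$ in (\ref{SINR n}) gives $\gamma_n=\rho\eta a_n/(r_n^2+d^2)$, so
\begin{align}
\gamma_n<\gamma_{thn}\iff r_n^2+d^2>\frac{\rho\eta a_n}{\gamma_{thn}}.
\end{align}
The SIC step $\gamma_{n\to f}>\gamma_{thf}$ has the analogous form $r_n^2+d^2<\rho\eta(a_f-a_n\gamma_{thf})/\gamma_{thf}$ when node $n$ decodes $s_f$ over its own channel (the standard NOMA convention behind (\ref{OP of user n defined})). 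It requires $a_f>a_n\gamma_{thf}$, otherwise the blockage probability is one. Combining the two events in (\ref{OP of user n defined}), the complement (success) is
\begin{align}
r_n^2+d^2<C_n,
\end{align}
where $C_n$ is the minimum of the two thresholds. I take this to be the $C_n$ defined in Appendix~A for the theorem, and I would confirm that identification against the appendix.

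Because node $n$ is uniform in the disc of radius $R_n$, $r_n^2$ is uniform on $[0,R_n^2]$, with CDF $F_{r_n^2}(t)=t/R_n^2$ for $0\le t\le R_n^2$. Therefore
\begin{align}
{\rm P}_{n,{\rm ISIC}}^{\rm LoS}=1-{\rm P}(r_n^2<C_n-d^2).
\end{align}
This splits into three cases. If $C_n<d^2$, the success probability is $0$, so the blockage probability is $1$. If $d^2<C_n<R_n^2+d^2$, the blockage probability is $1-(C_n-d^2)/R_n^2$. If $C_n>R_n^2+d^2$, success is certain, so the blockage probability is $0$. These are exactly (\ref{OP of user n with pSIC}).

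As a consistency check I would confirm that Theorem~\ref{theorem OP expression of user n with ipSIC} reduces to this. Letting $\Omega_I\rho\to0$ mimics removal of the interference, so $\xi_1\to\infty$, and the asymptotics ${\rm Ei}(-x)\sim -e^{-x}/x$ make the exponential terms collapse to the piecewise-linear form. The main obstacle is bookkeeping: $C_n$ is not defined before the statement, so I must fix its definition consistently with Appendix~A (the minimum of the two decoding thresholds, including the case $a_f\le a_n\gamma_{thf}$). The probability computation itself is elementary.
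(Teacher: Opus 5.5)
Your route matches the paper's. Setting $\varpi=0$ in the Appendix~A reduction gives ${\rm P}=1-F_{r_n^2}(C_n-d^2)$, and substituting \eqref{The CDF of quare of distance} yields the three cases, exactly as in the proof of Theorem~\ref{theorem OP of user f}.

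The step that fails is the identification you defer to Appendix~A. There $C_n=C_n^{\rm NISIC}(0)=\eta\rho a_n/\gamma_{thn}$, which is only the threshold for decoding $s_n$. The paper's ``simple transfer calculation'' goes from \eqref{OP of user n defined} straight to ${\rm P}[x_n^2+y_n^2>C_n^{\rm NISIC}(x)-d^2]$ and never uses $\gamma_{n\to f}$. Consistently, Theorem~\ref{theorem OP expression of user n with ipSIC} contains neither $a_f$ nor $\gamma_{thf}$. Your SIC threshold $\rho\eta(a_f/\gamma_{thf}-a_n)$ is precisely $C_f$ from Theorem~\ref{theorem OP of user f}. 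So what your argument actually proves is the piecewise formula with $\min\{\eta\rho a_n/\gamma_{thn},\,C_f\}$ in place of $C_n$. Your planned consistency check would expose this: letting $\Omega_I\to0$ at fixed $\rho$ in Theorem~\ref{theorem OP expression of user n with ipSIC} returns $1-(C_n-d^2)/R_n^2$ with $C_n=\eta\rho a_n/\gamma_{thn}$, not with the minimum.

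The two versions agree if and only if $\eta\rho a_n/\gamma_{thn}\le C_f$, i.e.\ $a_n(1+1/\gamma_{thn})\le a_f/\gamma_{thf}$. This condition does not depend on $\rho$, and it holds for the values in Table~\ref{parameter} ($0.6\le0.7$). When it holds, the SIC event is implied by $\gamma_n>\gamma_{thn}$, and your minimum equals the paper's $C_n$. To obtain the corollary as stated, you must add this hypothesis explicitly, or discard the SIC term as the paper implicitly does.

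Outside that condition, your version is the accurate evaluation of \eqref{OP of user n defined}, with the $h_f$ in the numerator of \eqref{SINR n to f} read as $h_n$, as you did. The stated formula with the paper's $C_n$ then underestimates the blockage probability, for example when $\hat{R}_n$ is small and the own-signal threshold exceeds $C_f$.
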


\subsection{Blockage Probability of Node $f$}\label{subsection OP of user f}
With $s_n$ regarded as the interference, node $f$ cannot decode its own signal $s_f$ and the interruption will occur. At this point, the corresponding blockage probability of node $f$ via LoS/NLoS propagation links in PASS-NOMA networks is expressed by
\begin{align}\label{OP of user f defined}
{{\rm{P}}_f} = P\left( {{\gamma _f} < {\gamma _{thf}}} \right).
\end{align}

\begin{theorem}\label{theorem OP of user f}
The closed-form blockage probability formulation of node $f$ over LoS propagation links in PASS-NOMA networks can be given by
\begin{align}\label{OP of user f}
{{\rm{P}}_f^{{\rm{LoS}}}} = \begin{cases}
1, & {{C_f} < {d^2}}, \\
{1 - \frac{{{C_f} - {d^2}}}{{R_f^2}}}, & {{d^2} < {C_f} < R_f^2 + {d^2}}, \\
0, & {{C_f} > R_f^2 + {d^2}},
\end{cases}
\end{align}
where 
${C_f} = \eta \rho (\frac{{{a_f}}}{{{\gamma _{thf}}}} - {a_n})$ on the condition of ${a_f} > {\gamma _{thf}}{a_n}$. 
\begin{proof}
Upon plugging \eqref{SINR f} into \eqref{OP of user f defined} and through some basic manipulations, the blockage probability of node $f$ is  further reckoned as
\begin{align}\label{the derivation of OP of user f}
{{\rm{P}}_f^{{\rm{LoS}}}} &= {\mathrm{P}}\left( {{x_f^2} + y_f^2 > {\frac{{\eta \rho {a_f}}}{\gamma _{thf}}} - {d^2}} \right) \nonumber \\
&= 1 - F_{r^2_f}\left( {\frac{{\eta \rho {a_f}}}{\gamma _{thf}}} - {d^2} \right).
\end{align}
We further substitute \eqref{The CDF of quare of distance} into the above equation to derive \eqref{OP of user f}. This completes the proof.
\end{proof}
\end{theorem}

Since node $ f $ is randomly located within the cell, it may reside at the cell edge or behind obstacles, leading to complete blockage of the LoS path from the PASS. In such scenarios, communication occurs solely via NLoS multipath components. Without a dominant propagation path, this multipath fading is well modeled as Rayleigh distribution, i.e.,  $ {\tilde{h}}_f \sim \mathcal{CN}\left( 0, \Omega_f \right) $ , where  $ \Omega_f $  denotes the average channel power.
The following corollary provides the theoretical analytical result.
\begin{corollary}\label{corollary OP of user f with NLoS}
The closed-form blockage probability formulation of node $f$ over NLoS propagation links in PASS-NOMA networks is given by
\begin{align}\label{OP of user
 f with NLoS}
{\rm{P}}_f^{{\rm{NLoS}}} = 1 - \frac{{{\Omega _f}{C_f}}}{{R_f^2}}\left( {{e^{ - \frac{{{d^2}}}{{{\Omega _f}{C_f}}}}} - {e^{ - \frac{{R_f^2 + {d^2}}}{{{\Omega _f}{C_f}}}}}} \right).
\end{align}
\begin{proof}
See the following Appendix~B.~\ref{Appendix:B}
\end{proof}
\end{corollary}

\subsection{Diversity Gain Analyses}\label{Diversity Analysis}
The diversity gain characterizes the high-SNR blockage behavior and reflects a system’s resilience to channel variations \cite{David2003Order,Ding6868214}. In PASS, however, all pinching antennas share a single waveguide and transmit identical signals, making the effective channel dominated by large scale propagation path loss rather than small scale propagation links. As a result, the diversity performance is primarily determined by node location randomness and LoS/NLoS blockage—not by the number of antennas. The relative diversity gain expression is referred to as
\begin{align}\label{definition expression of diversity order} 
{D_o} = - {\mathop {\lim} \limits_{\rho \to \infty} } \frac{\log \left[ P^{\infty} \left( \rho \right) \right]}{\log \left( \rho \right) },
\end{align}
where $P^{\infty}\left( \rho \right)$ expresses the asymptotic blockage probability formulation of nodes at high SNRs. 
\begin{corollary}\label{corollary asym OP of user n with ipSCI}
On the condition of $\rho \to \infty$, the asymptotic blockage probability formulation of node $n$ under NISIC can be written as
\begin{align}\label{asym OP of user n with ipSCI}
{\rm{P}}_{n,\infty }^{{\rm{NISIC}}} \approx & \left( {1 + \frac{{{d^2}}}{{R_n^2}}} \right){e^{ - \frac{{{\xi _1}}}{{R_n^2 + {d^2}}}}} - \frac{{{d^2}}}{{R_n^2}}{e^{ - \frac{{{\xi _1}}}{{{d^2}}}}} \nonumber \\
&- \frac{{{\xi _1}}}{{R_n^2}}\left[ {{\rm{Ei}}\left( { - \frac{{{\xi _1}}}{{{d^2}}}} \right) - {\rm{Ei}}\left( { - \frac{{{\xi _1}}}{{R_n^2 + {d^2}}}} \right)} \right].  
\end{align}
\begin{remark}\label{remark asym OP of user n with ipSCI}
Upon plugging \eqref{asym OP of user n with ipSCI} into \eqref{definition expression of diversity order}, the diversity gain of node $n$ under NISIC in PASS-NOMA networks is equivalent to zero, since the leftover interference remains non-vanishing even as the SNR grows unbounded. 
\end{remark}
\end{corollary}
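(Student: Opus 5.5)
The plan is to start from the closed form of Theorem~\ref{theorem OP expression of user n with ipSIC} and pass to the limit $\rho\to\infty$ in it. The quantity $C_n$ is not written out in the excerpt. By analogy with $C_f$ in Theorem~\ref{theorem OP of user f}, I expect it to be $C_n$ equal to $\eta\rho$ times a positive constant fixed by the power allocation and the targets (the minimum of the thresholds coming from the $s_f$ and $s_n$ decoding events). Under that assumption $C_n\to\infty$ as $\rho\to\infty$, so for sufficiently large SNR we are always in the third regime $C_n>R_n^2+d^2$. The asymptotic expression should therefore come only from the third branch of \eqref{OP of user n with ipSIC}.

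First I will check which quantities in that branch depend on $\rho$. The constant $\xi_1=\frac{\eta a_n}{\Omega_I\gamma_{thn}}$ does not, nor does $\xi_4=d^2/R_n^2$. The bracketed factor $\xi_3$ involves only $\xi_1$, $d^2$ and $R_n^2+d^2$, so it is also $\rho$-free. The only $\rho$-dependence left is the prefactor $e^{\frac{1}{\Omega_I\rho}}$, which tends to $1$ (equivalently $e^{1/(\Omega_I\rho)}=1+O(1/\rho)$). Dropping this prefactor and substituting $\xi_4$ and $\xi_3$ gives \eqref{asym OP of user n with ipSCI} directly, with the same bracket $\mathrm{Ei}(-\xi_1/d^2)-\mathrm{Ei}(-\xi_1/(R_n^2+d^2))$.

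As a consistency check I would also take the limit directly in the integral representation behind Appendix~A. Conditioned on the distance, the outage is $P\big(|h_I|^2>\frac{1}{\varpi\rho}(\frac{\rho\eta a_n}{\gamma_{thn}(r^2+d^2)}-1)\big)$, which tends to $\exp\big(-\xi_1/(r^2+d^2)\big)$. Averaging over $r^2$, which is uniform on $[0,R_n^2]$, gives $\frac{1}{R_n^2}\int_{d^2}^{R_n^2+d^2}e^{-\xi_1/t}\,dt$. Integrating by parts, with $\int e^{-a/t}dt=te^{-a/t}+a\,\mathrm{Ei}(-a/t)$, yields exactly the stated expression. This confirms the sign conventions for $\mathrm{Ei}$.

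For Remark~\ref{remark asym OP of user n with ipSCI}, the limit is a strictly positive constant independent of $\rho$, being an average of $e^{-\xi_1/t}>0$. Hence $\log P^{\infty}$ stays bounded and dividing by $\log\rho$ gives $D_o=0$.

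The main obstacle is justifying that the relevant regime is the third case. This requires confirming from the definition of $C_n$ in Appendix~A that it scales linearly in $\rho$ (given $a_f>\gamma_{thf}a_n$), and that the $\mathrm{Ei}(-1/(\Omega_I\rho))$ term, which appears only in the middle regime and diverges, does not enter the limit.
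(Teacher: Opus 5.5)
Your proposal is correct and follows the route the paper implicitly takes. Appendix~A defines $C_n=\eta\rho a_n/\gamma_{thn}$ (only the $s_n$ threshold enters, not a minimum), so for large $\rho$ only the third branch of \eqref{OP of user n with ipSIC} applies; letting $e^{1/(\Omega_I\rho)}\to 1$ then gives the stated $\rho$-independent positive floor and hence $D_o=0$, and your direct check via $\frac{1}{R_n^2}\int_{d^2}^{R_n^2+d^2}e^{-\xi_1/t}\,dt$ is also sound, showing that the third branch equals exactly $e^{1/(\Omega_I\rho)}$ times this limit.
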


\begin{corollary}\label{corollary4 asym OP of user n with pSCI}
In the case of $\rho \to \infty$ and ${C_n} > R_n^2 + {d^2}$, the asymptotic blockage probability formulation of node $n$ under ISIC over LoS propagation links is given by
\begin{align}\label{asym OP of user n with pSCI}
{\rm{P}}^{\rm{ISIC}}_{n,\infty} = 0.
\end{align}
\begin{remark}\label{remark asym OP of user n with pSCI}
Upon plugging \eqref{asym OP of user n with pSCI} into \eqref{definition expression of diversity order}, 
the diversity gain of node $n$ under ISIC in PASS-NOMA networks approaches infinity. This is because its blockage probability expression is a piecewise function, which will be truncated to zero since the SNRs exceeds a certain threshold.
\end{remark}
\end{corollary}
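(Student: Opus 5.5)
The plan is to start from Corollary~\ref{corollary OP of user n with pSIC}, which gives ${\rm{P}}_{n,{\rm{ISIC}}}^{{\rm{LoS}}}$ in closed form, and to follow how its branch condition moves as $\rho$ grows. Setting $\varpi=0$ removes the residual-interference term from \eqref{SINR n}. The blockage event of node $n$ then depends only on whether $r_n^2+d^2$ exceeds a deterministic threshold $C_n$. By analogy with $C_f$ in Theorem~\ref{theorem OP of user f}, I expect $C_n=\eta\rho\,\min\{a_f/\gamma_{thf}-a_n,\ a_n/\gamma_{thn}\}$ under the usual condition $a_f>\gamma_{thf}a_n$. It combines the SIC stage through $\gamma_{n\to f}$ and the own-signal stage through $\gamma_n$. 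The key point is that $C_n=\Theta(\rho)$, with a strictly positive coefficient fixed by $\eta$, the power allocation factors and the targets $\gamma_{thn},\gamma_{thf}$.

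The next step is to show that as $\rho\to\infty$ the third branch, $C_n>R_n^2+d^2$, is eventually the active one. Since $R_n$ and $d$ are fixed and finite, there is a finite threshold $\rho^\star=(R_n^2+d^2)/\bigl(\eta\min\{a_f/\gamma_{thf}-a_n,\ a_n/\gamma_{thn}\}\bigr)$, and for every $\rho>\rho^\star$ the third branch applies. There Corollary~\ref{corollary OP of user n with pSIC} gives ${\rm{P}}_{n,{\rm{ISIC}}}^{{\rm{LoS}}}=0$ exactly. Physically, the node's distance to the reference pinching antenna is bounded by $\sqrt{R_n^2+d^2}$, and there is no small-scale fading under LoS. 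So once $\rho>\rho^\star$, every realization of $\mathbf{u}_n$ satisfies both decoding conditions, and the blockage set is empty rather than merely small. This gives \eqref{asym OP of user n with pSCI}, both as a limit and as an identity for all sufficiently large $\rho$.

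The main obstacle is making sure the conclusion really rests on the deterministic bounded geometry. The whole argument depends on $C_n$ growing linearly in $\rho$ with a positive slope. If $a_f\le\gamma_{thf}a_n$, then $C_n\le0$ and the first branch gives blockage probability one for every $\rho$. I would therefore state explicitly that the condition $a_f>\gamma_{thf}a_n$, already assumed in Theorem~\ref{theorem OP of user f}, is needed. I would also briefly check the boundary case $C_n=R_n^2+d^2$: it occurs at a single value of $\rho$, and for continuously distributed $r_n^2$ the boundary event $r_n^2+d^2=C_n$ has probability zero, so the boundary does not matter.
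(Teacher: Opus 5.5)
Your proposal is correct and follows the paper's route. The result is read off from the third branch of the piecewise ISIC expression in Corollary~\ref{corollary OP of user n with pSIC}; that branch is eventually active because $C_n$ grows linearly in $\rho$, so ${\rm P}^{\rm ISIC}_{n,\infty}=0$ for all large $\rho$, and the remark's infinite diversity gain follows because $\log {\rm P}^{\rm ISIC}_{n,\infty}=-\infty$ there. The one difference is your guessed $C_n$: the paper (Appendix~A) sets $C_n=\eta\rho a_n/\gamma_{thn}$ and ignores the SIC stage, so its threshold is $\rho^\star=\gamma_{thn}(R_n^2+d^2)/(\eta a_n)$ and it needs no condition $a_f>\gamma_{thf}a_n$. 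Your $\min$-form is more faithful to definition~\eqref{OP of user n defined}, and the conclusion is unchanged either way, since only the positive linear growth of $C_n$ in $\rho$ matters.
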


\begin{corollary}\label{corollary asym OP of user f with LoS}
On the condition of $\rho \to \infty$ and ${C_f} > R_f^2 + {d^2}$, the asymptotic blockage probability formulation of node $f$ over LoS propagation links in PASS-NOMA networks is obtained as
\begin{align}\label{asym OP of user f with LoS}
{\rm P}_{f, \infty}^{\rm LoS} = 0.
\end{align}
\begin{remark}\label{remark asym OP of user f with LoS}
Upon plugging \eqref{asym OP of user f with LoS} into \eqref{definition expression of diversity order}, 
the diversity gain of node $f$ with LoS propagation links in PASS-NOMA networks tends to infinity. This result is similar to
the ISIC case of node $n$, where the blockage probability value will become zero when the SNRs surpass a certain threshold.
\end{remark}
\end{corollary}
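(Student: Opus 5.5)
The plan is to use the closed-form LoS result for node $f$ in Theorem~\ref{theorem OP of user f} directly, rather than expanding an asymptotic series. The key observation is that $C_f=\eta\rho\left(\frac{a_f}{\gamma_{thf}}-a_n\right)$ is linear in $\rho$ with strictly positive slope whenever the standing condition $a_f>\gamma_{thf}a_n$ holds. Consequently, for fixed geometric parameters $d$ and $R_f$, there is a finite threshold
\begin{align}
\rho^{\ast}=\frac{R_f^2+d^2}{\eta\left(\frac{a_f}{\gamma_{thf}}-a_n\right)}
\end{align}
such that $C_f>R_f^2+d^2$ for every $\rho>\rho^{\ast}$. On this range, the third branch of \eqref{OP of user f} applies, so ${\rm P}_f^{\rm LoS}=0$ identically. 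Letting $\rho\to\infty$ then gives \eqref{asym OP of user f with LoS} exactly, not merely to leading order.

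To make this airtight, I will re-derive why the third branch is zero from the proof of Theorem~\ref{theorem OP of user f}. There, ${\rm P}_f^{\rm LoS}=1-F_{r_f^2}(C_f-d^2)$, where $r_f^2=x_f^2+y_f^2$ is the squared horizontal distance of a point uniform on the disc of radius $R_f$. The CDF referenced in \eqref{The CDF of quare of distance} is $F_{r_f^2}(t)=t/R_f^2$ on $[0,R_f^2]$ and equals $1$ for $t>R_f^2$. When $C_f-d^2>R_f^2$ the CDF saturates at one, so the event that node $f$ lies farther than the decoding radius has probability zero. This argument uses only the bounded support of the uniform node distribution, which is also the physical reason behind the remark.

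For Remark~\ref{remark asym OP of user f with LoS}, I will substitute into \eqref{definition expression of diversity order}. Since $P^{\infty}(\rho)=0$ for all $\rho>\rho^{\ast}$, the quantity $-\log P^{\infty}(\rho)/\log\rho$ is $+\infty$ on that range. The limit should therefore be read in the extended sense as $D_o=\infty$. Equivalently, $P(\rho)$ decays faster than $\rho^{-D}$ for every finite $D$, because it vanishes eventually.

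The only delicate point is this interpretation of $\log 0$ in the diversity definition. I will state explicitly that the eventually-zero behavior implies, for every $D>0$, that $\rho^{D}P(\rho)\to0$, and hence the diversity order is unbounded. I also need to note the assumption $a_f>\gamma_{thf}a_n$. If it failed, $C_f\le0<d^2$ for all $\rho$, so ${\rm P}_f^{\rm LoS}=1$ for every $\rho$ and the diversity would be zero instead.
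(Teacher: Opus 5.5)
Your proposal is correct and is essentially the paper's argument: the corollary is read directly off the third branch of the piecewise closed form in Theorem~\ref{theorem OP of user f}. The remark attributes the infinite diversity to the blockage probability vanishing identically once the SNR passes a threshold, which you make precise through the explicit $\rho^{\ast}$ and the extended-sense reading of $\log 0$ (namely, $\rho^{D}{\rm P}_f^{\rm LoS}(\rho)\to 0$ for every $D>0$).

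One small point in your favor: your reduction ${\rm P}_f^{\rm LoS}=1-F_{r_f^2}(C_f-d^2)$ is the correct one. The paper's proof of Theorem~\ref{theorem OP of user f} writes the argument as $\frac{\eta\rho a_f}{\gamma_{thf}}-d^2$, dropping the $-\eta\rho a_n$ term, even though its stated result uses $C_f$.
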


\begin{corollary}\label{corollary asym OP of user f with NLoS}
The asymptotic expression for the blockage probability of node $f$ over NLoS propagation links in PASS-NOMA networks can be given by
\begin{align}\label{asym OP of user f with NLoS}
{\rm{P}}_{f,\infty }^{{\rm{NLoS}}} \approx \frac{{2{d^2} + R_f^2}}{{2{\Omega _f}{C_f}}}.
\end{align}
\begin{remark}\label{remark asym OP of user f with NLoS}
Upon plugging \eqref{asym OP of user f with NLoS} into \eqref{definition expression of diversity order}, the diversity gian of node $f$ with NLoS propagation links in PASS-NOMA networks is equivalent to one. The main reason is that as the SNRs increase, large-scale NLoS attenuation asymptotically dominates, and the blockage probability exhibits first-order attenuation.
\end{remark}
\end{corollary}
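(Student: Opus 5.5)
The asymptotic expression follows from expanding the exact NLoS blockage probability in Corollary~\ref{corollary OP of user f with NLoS} to first order in $1/\rho$. Recall that $C_f=\eta\rho\left(\frac{a_f}{\gamma_{thf}}-a_n\right)$ grows linearly in $\rho$ under the standing assumption $a_f>\gamma_{thf}a_n$. Hence, as $\rho\to\infty$, both exponents $\frac{d^2}{\Omega_f C_f}$ and $\frac{R_f^2+d^2}{\Omega_f C_f}$ tend to zero. The approach is to expand each exponential in \eqref{OP of user f with NLoS} by its Taylor series, $e^{-x}=1-x+\frac{x^2}{2}-O(x^3)$, and keep terms through second order. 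The second-order terms are needed because the prefactor $\frac{\Omega_f C_f}{R_f^2}$ multiplies the difference and raises the order by one.

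Write $t=\frac{1}{\Omega_f C_f}$, $A=d^2$ and $B=R_f^2+d^2$. Then
\begin{align}
e^{-At}-e^{-Bt} = (B-A)t-\frac{(B^2-A^2)t^2}{2}+O(t^3).
\end{align}
Multiplying by $\frac{1}{R_f^2 t}$ and using $B-A=R_f^2$ gives $1-\frac{(B+A)t}{2}+O(t^2)$. Therefore
\begin{align}
{\rm P}_f^{\rm NLoS}=\frac{(2d^2+R_f^2)}{2\Omega_f C_f}+O\left(\frac{1}{C_f^2}\right).
\end{align}
The constant term $1$ in \eqref{OP of user f with NLoS} cancels exactly. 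Dropping the higher-order terms yields \eqref{asym OP of user f with NLoS}.

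The only delicate point is bookkeeping. A first-order expansion alone leaves just the cancelling $1$ and would wrongly suggest the probability is zero to leading order, so the expansion must be carried to second order. One should also confirm that the remainder is $O(\rho^{-2})$, which holds because $C_f\propto\rho$. The conclusion in Remark~\ref{remark asym OP of user f with NLoS} then follows: the leading term is proportional to $\rho^{-1}$, so substituting into \eqref{definition expression of diversity order} gives a diversity gain of one.
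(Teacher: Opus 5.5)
Your proof is correct and follows the route the paper implicitly takes. The paper states this corollary without a written derivation, but its result is exactly what the second-order Taylor expansion of the two exponentials in the exact NLoS blockage expression gives, and you correctly note that the first-order terms only cancel the constant $1$, so the second-order terms are what matter. Since $C_f\propto\rho$ under $a_f>\gamma_{thf}a_n$, the leading term decays as $\rho^{-1}$ with an $O(\rho^{-2})$ remainder, which yields the unit diversity gain claimed in the remark.
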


\subsection{Delay-Constrained Throughput Analysis}\label{delay-limited mode System throughput}
The system throughput under delay-constrained transmission depends on the blockage probability associated with a given target data rate \cite{Ding6868214,Yue8026173}. Based on the previously blockage probability expression derived, the delay-constrained system throughput of PASS-NOMA networks with ISIC/NISIC can be defined as 
\begin{align}\label{Definition System Throughput}
{T_{{dl}}} = \left[ {1 - {\rm{P}}_n^{{\nu _1}}\left( \rho  \right)} \right]{\hat{R}}_{n} + \left[ {1 - {\rm{P}}_f^{{\nu _2}}\left( \rho  \right)} \right]{\hat{R}}_{f},
\end{align}
where 
$\nu_1 \in \left\{ \rm{NISIC}, \rm{ISIC} \right\}$, $\nu_2 \in \left\{ \rm{LoS}, \rm{NLoS} \right\}$,
${\rm{P}}_n^{NISIC}$ and ${\rm{P}}_n^{ISIC}$ can be obtained from \eqref{OP of user n with ipSIC} and \eqref{OP of user n with pSIC} respectively. ${\rm{P}}_n^{LoS}$ and ${\rm{P}}_n^{NLoS}$ can be acquired from \eqref{OP of user f} and \eqref{OP of user f with NLoS}, separately.

\section{Ergodic Data Rate}\label{Ergodic Data Rate}
In this section, we further evaluate the ergodic data rate of PASS-NOMA networks in terms of deriving the approximate expressions \cite{WangKaidi2025PASSNOMA}, which is expressed as
\begin{align}\label{Definition Ergodic Rate}
{\tilde R_\varphi } = \frac{1}{{\ln 2}}\int_0^\infty  {\frac{{1 - {F_{{\gamma _ \varphi }}}\left( x \right)}}{{1 + x}}{\rm{d}}x}.
\end{align}

\subsection{Ergodic Data Rate of Node $n$}
Applying the cumulative distribution function (CDF) of \eqref{SINR n} into \eqref{Definition Ergodic Rate}, the ergodic data rates of node $n$ in PASS-NOMA networks can be attained in the following contents.
\begin{theorem}\label{theorem ER of user n with ipSIC}
The approximate formulation of ergodic data rate for node $n$ under NISIC over LoS propagation links can be presented as
\begin{align}\label{ER of user n with ipSIC}
&\tilde R_n^{\rm{NISIC}} \approx \frac{\pi }{{M\ln 2}}\sum\limits_{k = 1}^M \left\{ \frac{\rho \eta a_n \sqrt{1-t_k^2}}{2\left(R_n^2 + d^2 \right)\left( 1 + \zeta_3 \right)} \right.  \nonumber \\ 
& \left. \times \left\{ 1 - e^{\frac{1}{\Omega_I \rho}} \left[ \left( 1 + \frac{d^2}{R_n^2} \right) e^{-\frac{\eta a_n}{\zeta_3 \Omega_I \left( R_n^2 + d^2 \right)}} - \frac{d^2}{R_n^2} e^{-\frac{\eta a_n}{\zeta_3 \Omega_I d^2}} \right. \right. \right. \nonumber \\ 
& \left. \left. \left.  - \frac{\eta a_n \zeta_1}{\zeta_3 \Omega_I R_n^2} \right] \right\} + \frac{\rho \eta a_n R_n^2 \sqrt{1-t_k^2}}{2 d^2 \left( R_n^2 + d^2 \right) \left( 1+ \zeta_4 \right)} \left[ -\frac{d^2}{R_n^2} \left( 1  \right. \right. \right. \nonumber \\ 
&\left. \left. \left. - e^{-\frac{\eta a_n}{\zeta_4 \Omega_I d^2} + \frac{1}{\Omega_I \rho}} + \frac{\eta a_n \zeta_2}{\zeta_4 \Omega_I R_n^2} e^{\frac{1}{\Omega_I \rho}} \right)  \right] \right\}, 
\end{align}
where ${t_k} = \cos \left( {\frac{{2k - 1}}{{2M}}\pi } \right),
{\zeta _1} = {\rm{Ei}}\left( { - \frac{{\eta {a_n}}}{{{\zeta _3}{\Omega _I}{d^2}}}} \right) - {\rm{Ei}}\left( { - \frac{{\eta {a_n}}}{{{\zeta _3}{\Omega _I}\left( {R_n^2 + {d^2}} \right)}}} \right),{\zeta _2} = {\rm{Ei}}\left( { - \frac{{\eta {a_n}}}{{{\zeta _4}{\Omega _I}{d^2}}}} \right) - {\rm{Ei}}\left( { - \frac{1}{{{\Omega _I}\rho }}} \right),
{\zeta _3} = \frac{{\rho \eta {a_n}\left( {{t_k} + 1} \right)}}{{2\left( {R_n^2 + {d^2}} \right)}},{\zeta _4} = \frac{{\rho \eta {a_n}R_n^2\left( {{t_k} + 1} \right)}}{{2{d^2}\left( {R_n^2 + {d^2}} \right)}} + \frac{{\rho \eta {a_n}}}{{R_n^2 + {d^2}}}$.
\begin{proof}
See Appendix~C.
\end{proof}
\end{theorem}
\begin{corollary}\label{corollary ER of user n with pSIC}
When $\varpi = 0$ is satisfied, the closed-form formulation for ergodic data rate of node $n$ under ISIC over LoS propagation links can be written as
\begin{align}\label{ER of user n with pSIC}
&\tilde R_n^{{\rm{ISIC}}} = \frac{1}{{\ln 2}}\left[ \frac{{\eta \rho {a_n}}}{{R_n^2}}\ln \left( {\frac{{R_n^2 + {d^2}}}{{{d^2}}}} \right) - \frac{{\eta \rho {a_n} + {d^2}}}{{R_n^2}} \right.  \nonumber \\
& \left. \times \ln \left( {1 + \frac{{\eta \rho {a_n}}}{{{d^2}}}} \right) + \left( {\frac{{\eta \rho {a_n} + {d^2}}}{{R_n^2}} + 1} \right)\ln \left( {1 + \frac{{\eta \rho {a_n}}}{{R_n^2 + {d^2}}}} \right) \right].
\end{align}
\begin{proof}
See the following Appendix~D.
\end{proof}
\end{corollary}

\subsection{Ergodic Data Rate of Node $f$}
Following the same procedure as node $n$, we evaluate the ergodic data rate of node $f$ over both LoS and NLoS channels in PASS-NOMA networks. Upon plugging the CDF of \eqref{SINR f} into \eqref{Definition Ergodic Rate}, the ergodic data rate formulations of node $f$ are provided in the following contents.
\begin{theorem}\label{theorem ER of user f}
The closed-form ergodic data rate formulation of node f over LoS propagation links in PASS-NOMA networks is given by
\begin{align}\label{ER of user f}
{{\tilde R}_f^{\rm LoS}} &= {\log _2}\left( {1 + \frac{{\eta \rho {a_f}}}{{R_f^2 + {d^2} + \eta \rho {a_n}}}} \right) + \frac{{\eta \rho {a_f}}}{{R_f^2}}  \nonumber \\
& \times {\log _2}\left( {1 + \frac{{R_f^2}}{{{d^2} + \eta \rho {a_n}}}} \right) - \frac{{\eta \rho  + {d^2}}}{{R_f^2}}  \nonumber \\
& \times {\log _2}\left[ {\frac{{\left( {{d^2} + \eta \rho {a_n} + \eta \rho {a_f}} \right)\left( {R_f^2 + {d^2} + \eta \rho {a_n}} \right)}}{{\left( {{d^2} + \eta \rho {a_n}} \right)\left( {R_f^2 + {d^2} + \eta \rho {a_n} + \eta \rho {a_f}} \right)}}} \right].
\end{align} 
\begin{proof}
See the following Appendix~E.
\end{proof}
\end{theorem}
\begin{corollary}\label{corollary ER of user f with NLoS}
The approximate ergodic data rate formulation of node $f$ over NLoS propagation links in PASS-NOMA networks is presented as
\begin{align}\label{ER of user f with NLoS}
 \tilde R_f^{{\rm{NLoS}}}  &  \approx {\rm{ }}\frac{\pi }{{M\ln 2}}\sum\limits_{k = 1}^M {\left\{ {\frac{{{a_f}\kappa \sqrt {1 - t_k^2} }}{{2{a_n}R_f^2\left[ {1 + \frac{{{a_f}}}{{2{a_n}}}\left( {t_k^2 + 1} \right)} \right]}}} \right.} {\rm{ }}  \nonumber  \\ 
 & \left. { \times \left( {{e^{ - \frac{{{d^2}}}{\kappa }}} - {e^{ - \frac{{R_f^2 + {d^2}}}{\kappa }}}} \right)} \right\}, 
\end{align}
where ${t_k} = \cos \left( {\frac{{2k - 1}}{{2M}}\pi } \right),\kappa  = {\Omega _f}\rho \eta {a_n}\left( {\frac{2}{{t_k^2 + 1}} - 1} \right)$.
\begin{proof}
See the following Appendix~F.
\end{proof}
\end{corollary}
\subsection{Slope Analyses}
The high SNR slope of average rate is one of the key parameters for assessing the ergodic performance of communication systems similar to the diversity give as shown above, which can be usually defined as
\begin{align}\label{Definition high SNR Slope}
S = \mathop {\lim }\limits_{\rho  \to \infty } \frac{{\tilde R\left( \rho  \right)}}{{\log \left( \rho  \right)}},
\end{align}
where ${\tilde R\left( \rho  \right)}$ denotes the asymptotic formulation for the ergodic data rate of nodes at high SNRs. 
Based on the above, the asymptotic formulation for ergodic data rate of node $n$ under NISIC in PASS-NOMA networks cannot be obtained, since 
the presence of leftover interference introduces a random term whose reciprocal moment is not finite in the considered fading model, which prevents the large SNR converging progress.

As known, the ergodic data rate of node $n$ under ISIC cannot be expressed in a compact closed form. However, we can obtain its upper bound using Jensen’s inequality, as given by 
\begin{align}\label{asmy ER of user n with pSIC jensen inequality}
\tilde R_{n,\rm{upp}}^{\rm{ISIC}} = \mathbb{E} \left[ \log\left( 1+\gamma_n \right) \right] \leq \log\left[ 1+ \mathbb{E}\left( \gamma_n \right) \right].
\end{align}

\begin{corollary}\label{corollary asym ER of user n with pSIC}
According to the above inequality, on the condition of $\rho \to \infty$, the approximate ergodic data rate formulation of node $n$ under ISIC over LoS propagation links is given by
\begin{align}\label{asym ER of user n with pSIC}
\tilde R_{n,\rm{upp}}^{\rm{ISIC,\infty}} = \log \left[ {1 + \frac{{\eta \rho {a_n}}}{{R_n^2}}\ln \left( {\frac{{R_n^2 + {d^2}}}{{{d^2}}}} \right)} \right].
\end{align}
\end{corollary}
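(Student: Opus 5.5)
We apply Jensen's bound \eqref{asmy ER of user n with pSIC jensen inequality} and compute $\mathbb{E}(\gamma_n)$ exactly under ISIC. With $\varpi=0$ and the LoS representative-antenna model (${\tilde h}_n=1$, $\alpha=2$), \eqref{SINR n} becomes
\begin{align}
\gamma_n = \frac{\eta\rho a_n}{x_n^2+y_n^2+d^2} = \frac{\eta\rho a_n}{r_n^2+d^2},
\end{align}
because $|\phi_k|=1$ and $\|\mathbf{u}_n-\boldsymbol{\psi}_k\|^2=r_n^2+d^2$. Since node $n$ is uniform in the disc of radius $R_n$, the CDF in \eqref{The CDF of quare of distance} shows that $r_n^2$ is uniform on $[0,R_n^2]$, with density $1/R_n^2$. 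The first step is therefore
\begin{align}
\mathbb{E}(\gamma_n)=\frac{\eta\rho a_n}{R_n^2}\int_0^{R_n^2}\frac{{\rm d}u}{u+d^2}=\frac{\eta\rho a_n}{R_n^2}\ln\left(\frac{R_n^2+d^2}{d^2}\right).
\end{align}
This integral needs no approximation, so it is the core of the computation.

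Substituting into the right-hand side of \eqref{asmy ER of user n with pSIC jensen inequality} gives \eqref{asym ER of user n with pSIC} directly. Note that the quantity is an upper bound for every $\rho$, not only asymptotically. The phrase ``on the condition of $\rho\to\infty$'' is justified by comparing it with the exact closed form \eqref{ER of user n with pSIC} of Corollary~\ref{corollary ER of user n with pSIC}. For large $\rho$, both expressions grow like $\log\rho$ plus a constant.

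To support calling it an approximation, I would expand \eqref{ER of user n with pSIC} for $\rho\to\infty$. Use $\ln(1+x/c)=\ln x-\ln c + c/x + O(x^{-2})$ on each logarithm with $x=\eta\rho a_n$. The terms proportional to $\eta\rho a_n\ln(\cdot)$ should cancel to $O(1)$, leaving $\ln(\eta\rho a_n)+O(1)$, which matches the leading behaviour of the Jensen bound. This bookkeeping of the cancelling $\rho\ln\rho$ terms is the main obstacle. If the full cancellation is too messy, I would settle for showing that both expressions have high-SNR slope one under \eqref{Definition high SNR Slope}, which is all the subsequent slope analysis requires.
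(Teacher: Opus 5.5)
Your proposal is correct and follows the paper's (implicit) route. Under ISIC you reduce $\gamma_n$ to $\eta\rho a_n/(r_n^2+d^2)$, average over $r_n^2$ uniform on $[0,R_n^2]$ to obtain $\mathbb{E}(\gamma_n)=\frac{\eta\rho a_n}{R_n^2}\ln\left(\frac{R_n^2+d^2}{d^2}\right)$, and substitute this into the Jensen bound. Your extra high-SNR expansion of the exact closed form is not in the paper but checks out: the $\rho\ln\rho$ and $\rho$ terms cancel exactly, so the exact rate and the bound differ only by a positive constant (the Jensen gap), and both therefore have slope one.
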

\begin{remark}\label{remark asym ER of user n with pSIC}
Upon plugging \eqref{asym ER of user n with pSIC} into \eqref{Definition high SNR Slope}, the high SNR slope of node $n$ under ISIC over LoS propagation links is equivalent to one.
\end{remark}

\begin{corollary}\label{corollary asym ER of user f}
In the case of $\rho \to \infty$, we can obtain the asymptotic formulation for ergodic data rate of node $f$ over LoS propagation links as below
\begin{align}\label{asym ER of user f}
\tilde R_{f,\infty}^{\rm LoS} = {\log _2}\left( {1 + \frac{{{a_f}}}{{{a_n}}}} \right).
\end{align}
\end{corollary}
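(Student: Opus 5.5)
We obtain \eqref{asym ER of user f} by letting $\rho \to \infty$ in the closed-form LoS ergodic rate \eqref{ER of user f} of Theorem~\ref{theorem ER of user f}. As a cross-check, we also bound $\gamma_f$ directly from \eqref{SINR f}.

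\emph{Direct bound.} For LoS links with $\alpha=2$, the channel gain is $|h_f\phi_k|^2 = \eta/(r_f^2+d^2)$, where $r_f^2 = x_f^2+y_f^2 \in [0,R_f^2]$. Substituting into \eqref{SINR f} gives
\begin{align}
\gamma_f = \frac{\eta\rho a_f}{\eta\rho a_n + r_f^2+d^2}. \nonumber
\end{align}
Since $d^2 \le r_f^2+d^2 \le R_f^2+d^2$ uniformly, $\gamma_f$ is sandwiched between two deterministic quantities. Both bounds tend to $a_f/a_n$ as $\rho\to\infty$. By monotone (or dominated) convergence, $\mathbb{E}[\log_2(1+\gamma_f)] \to \log_2(1+a_f/a_n)$. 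This limit is exactly the first term of \eqref{ER of user f}: $\log_2\!\big(1+\frac{\eta\rho a_f}{R_f^2+d^2+\eta\rho a_n}\big) \to \log_2(1+a_f/a_n)$.

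\emph{Consistency check on \eqref{ER of user f}.} The remaining terms of \eqref{ER of user f} must therefore vanish in the limit, and we verify this as the main step. The second term is $\frac{\eta\rho a_f}{R_f^2}\log_2\!\big(1+\frac{R_f^2}{d^2+\eta\rho a_n}\big)$. Using $\log_2(1+x)\approx x/\ln 2$, it behaves as
\begin{align}
\frac{\eta\rho a_f}{R_f^2}\cdot\frac{R_f^2}{(\eta\rho a_n)\ln 2} = \frac{a_f}{a_n\ln 2}, \nonumber
\end{align}
which is $O(1)$, not zero. The third term has the prefactor $\frac{\eta\rho a_f + d^2}{R_f^2}$; the printed $\eta\rho$ is presumably $\eta\rho a_f$, as the integration in Appendix~E would produce. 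Its logarithm has argument $\frac{(A+B)(A'+R_f^2)}{A(A'+B+R_f^2)}$, with $A=d^2+\eta\rho a_n$, $B=\eta\rho a_f$, and $A'=A$. Expanding this to first order in $R_f^2/\rho$ gives $\ln(\cdot)\approx R_f^2\big[\frac1A-\frac1{A+B}\big]$. Multiplied by the prefactor, this tends to
\begin{align}
\frac{a_f}{\ln 2}\Big(\frac1{a_n}-\frac1{a_n+a_f}\Big). \nonumber
\end{align}

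\emph{Main obstacle.} The hardest part is the bookkeeping: the $O(1)$ contributions of the second and third terms must cancel. With the prefactor exactly as printed ($\eta\rho$ rather than $\eta\rho a_f$), they do not cancel cleanly. The two contributions are $\frac{a_f}{a_n}$ and $a_f\big(\frac{1}{a_n}-\frac{1}{a_n+a_f}\big)$, each divided by $\ln 2$, and they differ by the residual $\frac{a_f}{(a_n+a_f)\ln 2}$. Our first move is therefore to re-derive these two terms from the Appendix~E integral $\int \frac{1-F_{\gamma_f}(x)}{1+x}\,dx$, with $F_{\gamma_f}$ piecewise linear in the threshold from \eqref{OP of user f}, and obtain the exact $\rho$-dependence.

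If the algebraic cancellation remains delicate, we fall back on the direct argument above. The sandwich bounds on $\gamma_f$ are uniform in $r_f$ and converge to the same limit, so $\tilde R_f^{\rm LoS}$ equals $\log_2(1+a_f/a_n)+o(1)$ rigorously. This establishes \eqref{asym ER of user f} independently of the algebraic form of \eqref{ER of user f}.
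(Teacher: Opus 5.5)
\paragraph{Verdict.} Your sandwich argument is correct and is a complete proof on its own. The consistency check, however, misdiagnoses the prefactor of the third term.

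\paragraph{Comparison with the paper.} With $|h_f\phi_k|^2=\eta/(r_f^2+d^2)$ you get $\gamma_f=\eta\rho a_f/(\eta\rho a_n+r_f^2+d^2)$. Since $r_f^2+d^2\in[d^2,R_f^2+d^2]$, $\log_2(1+\gamma_f)$ lies between two deterministic quantities that both tend to $\log_2(1+a_f/a_n)$, so no convergence theorem is needed. The paper writes no separate proof. It presents the corollary as the $\rho\to\infty$ limit of the closed form \eqref{ER of user f} in Theorem~\ref{theorem ER of user f}. That route needs the exact expression and a cancellation between the second and third terms, each individually $O(1)$: they tend to $\pm a_f/(a_n\ln 2)$. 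In the printed form, this cancellation silently uses $a_n+a_f=1$. Your route uses only boundedness of the node--antenna distance. It therefore holds for arbitrary $a_n,a_f>0$ and any placement distribution supported in the disc. It also gives a rate: $|\gamma_f-a_f/a_n|=O(1/\rho)$ uniformly in $r_f$, so $\tilde R_f^{\rm LoS}=\log_2(1+a_f/a_n)+O(1/\rho)$. Your lower bound $\log_2\bigl(1+\frac{\eta\rho a_f}{R_f^2+d^2+\eta\rho a_n}\bigr)$ is exactly the first term of \eqref{ER of user f}.

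\paragraph{The error in your consistency check.} On the middle interval of Appendix~E the integrand is $\frac{1}{1+x}\bigl(\frac{\eta\rho a_f}{R_f^2 x}-\frac{\eta\rho a_n+d^2}{R_f^2}\bigr)$. The correct endpoints are $\chi_1=\frac{\eta\rho a_f}{R_f^2+d^2+\eta\rho a_n}$ and $\chi_2=\frac{\eta\rho a_f}{d^2+\eta\rho a_n}$; the printed $\chi_{f1},\chi_{f2}$ in Appendix~E omit $\eta\rho a_n$.
\begin{itemize}
\item The $\frac{1}{x(1+x)}$ piece yields the second term plus $-\frac{\eta\rho a_f}{R_f^2}\ln\frac{1+\chi_2}{1+\chi_1}$.
\item The $\frac{1}{1+x}$ piece yields $-\frac{\eta\rho a_n+d^2}{R_f^2}\ln\frac{1+\chi_2}{1+\chi_1}$.
\end{itemize}
So the prefactor of the third term is $\frac{\eta\rho(a_n+a_f)+d^2}{R_f^2}=\frac{A+B}{R_f^2}$ in your notation. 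Under the paper's normalization $a_n+a_f=1$ (values $0.3$ and $0.7$), this is exactly the printed $\frac{\eta\rho+d^2}{R_f^2}$, not $\frac{\eta\rho a_f+d^2}{R_f^2}$.

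With $A+B$ as the prefactor, your own expansion gives $(A+B)\bigl[\frac1A-\frac1{A+B}\bigr]=\frac BA\to\frac{a_f}{a_n}$. The third term therefore cancels the second exactly. The residual $\frac{a_f}{(a_n+a_f)\ln 2}$ comes from your substitution, so your sentence blaming the prefactor ``exactly as printed'' has it backwards. Because you rest the proof explicitly on the sandwich bound, the conclusion stands. You should still correct or drop the ``main obstacle'' paragraph.
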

\begin{remark}\label{remark asym ER of user f}
Upon plugging \eqref{asym ER of user f} into \eqref{Definition high SNR Slope}, the high SNR slope of node $f$ is zero. This implies that the ergodic data rate asymptotically approachs throughput ceiling at high SNRs, since the remaining interference from non-orthogonal nodes remains non-vanishing and offsets the growth of desired signal power.
\end{remark}
\begin{corollary}\label{corollary asym ER of user f with NLoS}
In the case of $\rho \to \infty$, the asymptotic ergodic data rate formulation of node $f$ over NLoS propagation links can be presented as
\begin{align}\label{asym ER of user f with NLoS}
\tilde R_{f,\infty}^{{\rm{NLoS}} } = \frac{\pi }{{M\ln 2}}\sum\limits_{k = 1}^M {\frac{{{a_f}\sqrt {1 - t_k^2} }}{{2{a_n} + {a_f}\left( {{t_k} + 1} \right)}}}.
\end{align}
\end{corollary}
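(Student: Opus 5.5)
The starting point is the approximate NLoS ergodic rate in Corollary~\ref{corollary ER of user f with NLoS}, obtained in Appendix~F by Gauss--Chebyshev quadrature. In that expression $\rho$ enters only through $\kappa = \Omega_f\rho\eta a_n\left(\frac{2}{t_k^2+1}-1\right)$. The summands have the form
\begin{align}
\frac{a_f\sqrt{1-t_k^2}}{2a_nR_f^2\left[1+\frac{a_f}{2a_n}(t_k^2+1)\right]}\,\kappa\left(e^{-\frac{d^2}{\kappa}}-e^{-\frac{R_f^2+d^2}{\kappa}}\right),
\end{align}
so the plan is to let $\rho\to\infty$, which sends $\kappa\to\infty$, and compute the limit of each term separately. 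The number of terms $M$ is fixed, so the limit can be taken term by term.

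\textbf{Step 1.} Expand the exponential difference using $e^{-x}=1-x+\mathcal{O}(x^2)$. This gives
\begin{align}
\kappa\left(e^{-\frac{d^2}{\kappa}}-e^{-\frac{R_f^2+d^2}{\kappa}}\right)=R_f^2+\mathcal{O}\left(\kappa^{-1}\right)\to R_f^2 .
\end{align}
The factor $R_f^2$ cancels the $R_f^2$ in the denominator, and all dependence on $d$, $\Omega_f$, $\eta$ and $\rho$ drops out. This is consistent with the interference-limited behaviour observed in the LoS case, Corollary~\ref{corollary asym ER of user f}.

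\textbf{Step 2.} Simplify the remaining coefficient. Multiplying numerator and denominator by $2a_n$ gives
\begin{align}
\frac{a_f\sqrt{1-t_k^2}}{2a_n+a_f(\,\cdot\,+1)},
\end{align}
where the quadrature-node argument "$\cdot$" must match the one in the target formula. Summing with the prefactor $\frac{\pi}{M\ln 2}$ then yields \eqref{asym ER of user f with NLoS}.

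\textbf{Main obstacle.} The difficulty is that the target contains $t_k+1$, whereas Corollary~\ref{corollary ER of user f with NLoS} contains $t_k^2+1$. I would resolve this by re-deriving the asymptote directly rather than relying on the printed expression. Conditioned on the distance, at high SNR
\begin{align}
F_{\gamma_f}(x)\to P\!\left(|\tilde h_f|^2<0\right)=0 \quad \text{for } x<\frac{a_f}{a_n},
\end{align}
and $F_{\gamma_f}(x)=1$ for $x\ge a_f/a_n$. Hence $\tilde R_{f,\infty}^{\rm NLoS}=\frac{1}{\ln 2}\int_0^{a_f/a_n}\frac{dx}{1+x}$. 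Substituting $x=\frac{a_f}{2a_n}(t+1)$ maps this onto $[-1,1]$, and applying Gauss--Chebyshev quadrature with weight $\sqrt{1-t^2}$ gives exactly the summand $\frac{a_f\sqrt{1-t_k^2}}{2a_n+a_f(t_k+1)}$. This confirms the linear $t_k+1$ form and suggests the square in the earlier corollary is a typo. As a sanity check, the limiting value equals $\log_2(1+a_f/a_n)$, matching the LoS asymptote, which gives zero high-SNR slope.
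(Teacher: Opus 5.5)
Your proof is correct and follows the route the paper implicitly takes (it gives no separate proof). That route lets $\rho\to\infty$ termwise in the Gauss--Chebyshev expression of Corollary~\ref{corollary ER of user f with NLoS}, where $\kappa\bigl(e^{-d^2/\kappa}-e^{-(R_f^2+d^2)/\kappa}\bigr)\to R_f^2$ cancels the $R_f^2$ in the denominator. Your diagnosis of the $t_k^2+1$ as a typo is also right: redoing the substitution $x=\frac{a_f}{2a_n}(t+1)$ on $[0,a_f/a_n]$ in Appendix~F gives $t_k+1$ in both $\kappa$ and the denominator, which your limit-then-quadrature check reproduces. That check shows the result is the $M$-point approximation of $\log_2(1+a_f/a_n)$; strictly, the first-kind rule has weight $1/\sqrt{1-t^2}$, and the factor $\sqrt{1-t_k^2}$ comes from multiplying and dividing by it.
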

\begin{remark}\label{remark asym ER of user f with NLoS}
Upon plugging \eqref{asym ER of user f with NLoS} into \eqref{Definition high SNR Slope}, the high SNR slope of node $f$ under NLoS conditions is also zero. This is because both the desired signal and the interference from node $n$ scale with transmit power, causing the SINR to converge to a constant.
\end{remark}

\subsection{Latency-Tolerant Throughput Analyses}\label{delay-tolerant mode System throughput}
In latency-tolerant transmission model, the achievable data rate is described in terms of ergodic rate, where the propagation links of node $n$ and node $f$ can be considered to be independent of each other. At this point, the latency-tolerant system throughput of PASS-NOMA networks is defined as 
\begin{align}\label{Definition System Throughput 2}
T_{{dt}} = \tilde R_n^{\nu_1} + \tilde R_f^{\nu_2},
\end{align}
where $\tilde R_n^{\rm{NISIC}}$ and $\tilde R_n^{\rm{ISIC}}$ is obtained from \eqref{ER of user n with ipSIC} and \eqref{ER of user n with pSIC} respectively. $\tilde R_f^{LoS}$ and $\tilde R_f^{NLoS}$ can be retrieved from \eqref{ER of user f} and \eqref{ER of user f with NLoS} separately.

\begin{table}
\centering
\caption{Key Parameters Set for PASS-NOMA Networks.}
\tabcolsep5pt
\renewcommand\arraystretch{1.2} 
\begin{tabular}{|l|l|}
\hline
\multirow{2}{*}{Power allocation factors for node $n$ and node $f$} & \multirow{1}{*}{$a_r = 0.3$} \\ & \multirow{1}{*}{$a_f = 0.7$} \\
\hline
\multirow{2}{*}{Target data rates for node $n$ and node $f$} & \multirow{1}{*}{${\hat{R}}_n = 1$ BPCU} \\ & \multirow{1}{*}{${\hat{R}}_f = 1$ BPCU} \\
\hline
\multirow{1}{*}{Carrier frequency} & \multirow{1}{*}{$f_c = 1$ GHz} \\
\hline
\multirow{1}{*}{Bandwidth} & \multirow{1}{*}{$B = 1000$ MHz} \\
\hline
\multirow{1}{*}{Path loss factor} & \multirow{1}{*}{$\alpha = 2$} \\
\hline
\multirow{1}{*}{The number of pinching antennas} & \multirow{1}{*}{$K = 10$} \\
\hline
\multirow{1}{*}{The height of PASS} & \multirow{1}{*}{$d = 5$ m} \\
\hline
\multirow{1}{*}{The max radius of communication region} & \multirow{1}{*}{$R_\mathcal{D} = 10$ m} \\
\hline
\multirow{2}{*}{The distribution radius of node $n$ and $f$} & \multirow{1}{*}{$R_n = 0.6R_\mathcal{D}$} \\& \multirow{1}{*}{$R_f = R_\mathcal{D}$} \\
\hline
\end{tabular}
\label{parameter}
\end{table}
\section{Simulation Results}\label{Numerical Results}
This following contents provide simulation results to verify the analytical results derived above, and evaluate the effect of main system parameters, i.e., the communication radius, the number of activated pinching antennas on PASS-NOMA networks. The parameters used on the simulation are listed in Table~\ref{parameter}, with the noise power is set as $\sigma^2 = -140 + 10\log(B)$ dB and BPCU denotes bits per channel use \cite{Liu2025PASSTutorial,Hou2025PerformancePASS}. The PASS-OMA networks with LoS/NLoS propagation links are chosen as the benchmark to enable performance comparison.

\subsection{Blockage Probability}
Fig.~\ref{Fig2. OP-PASS-NOMA-OMA} plots the blockage probability versus $\rho$ in PASS-NOMA networks with $R_{\mathcal{D}} = 10$ m, $K = 10$, and ${\hat{R}}_n = {\hat{R}}_f = 1$ BPCU. 
The purple and blue solid curves representing the blockage probability of node $n$ under NISIC and ISIC are generated using \eqref{OP of user n with ipSIC} and \eqref{OP of user n with pSIC}, respectively. 
The dark red and green solid curves for blockage probability of node $f$ over LoS and NLoS propagation links are obtained from \eqref{OP of user f} and \eqref{OP of user f with NLoS}, respectively. As can be seen that the Monte Carlo simulation results derived of PASS-NOMA networks are match with the analytical derivations. The blue dashed lines for asymptotic blockage probability of node $n$ with ISIC and node $f$ with LoS/NLoS propagation links converge to theoretical analyses, which can be obtained from \eqref{asym OP of user n with pSCI}, \eqref{asym OP of user f with LoS} and \eqref{asym OP of user f with NLoS}, respectively.
As shown in the figure, the blockage performance of node $n$ under ISIC and that of node $f$ in PASS-NOMA networks outperforms that of PASS-OMA networks. The main reason is that PASS-NOMA provides the enhanced spectral efficiency and intrinsic fairness in serving multiple nodes with heterogeneous channel conditions. Due to leftover interference, the asymptotic blockage probability of node $n$ under NISIC approaches an error floor at high SNRs, which becomes much larger with increasing of interference value. In addition, PASS’s LoS propagation links significantly improves blockage behaviors, reflecting the path loss reduction and more favorable channel dynamics achieved by adaptive pinching antenna activation.

\begin{figure}[t!]
\centering
 \includegraphics[width= 3.0in, height=2.2in]{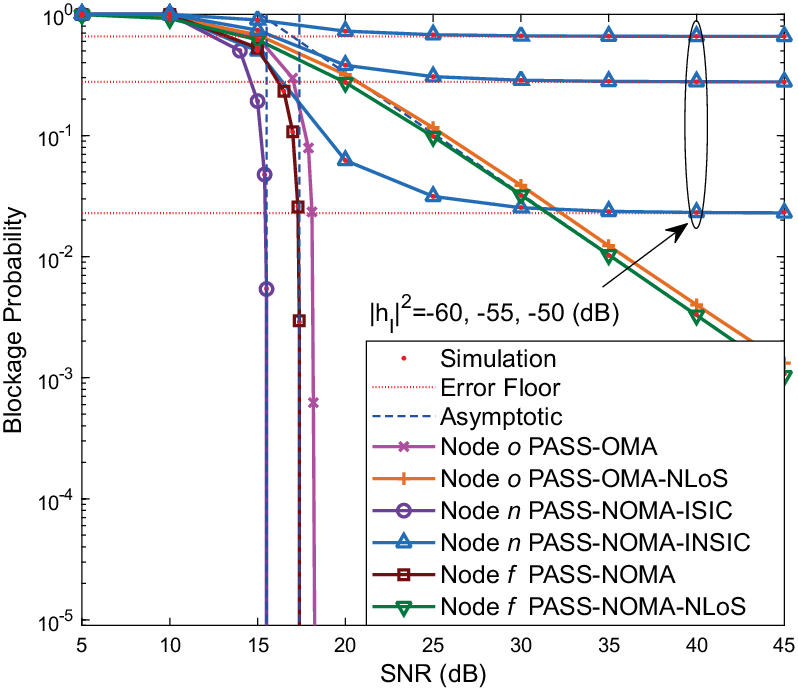}
 \caption{The blockage probability versus $\rho$, with $R_{\mathcal{D}} = 10$ m, $K = 10$, and ${\hat{R}}_n = {\hat{R}}_f = 1$ BPCU.}
\label{Fig2. OP-PASS-NOMA-OMA}
\end{figure}

\begin{figure}[t!]
\centering
 \includegraphics[width= 3.0in, height=2.2in]{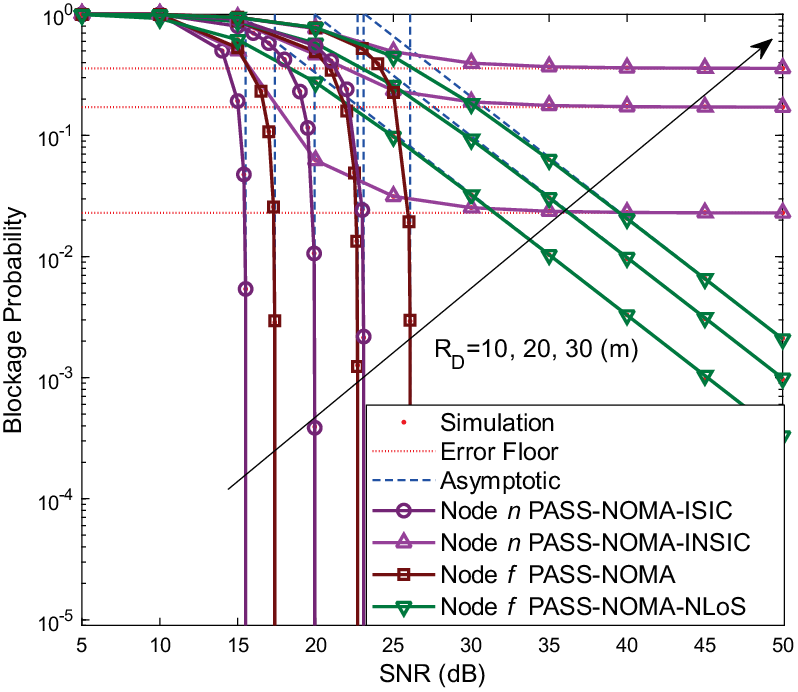}
 \caption{The blockage probability versus $\rho$, with different communication region radius $R_{\mathcal{D}}$ from 10 m to 30 m.}
\label{Fig3. OP-PASS-NOMA-diff-R}
\end{figure}

Fig.~\ref{Fig3. OP-PASS-NOMA-diff-R} plots the blockage probability of node $n$ and node $f$ versus $\rho$ for different communication region radii, with $K = 10$ and ${\hat{R}}_n = {\hat{R}}_f = 1$ BPCU. As can be observed from the figure that as the communication radius expands, i.e., $R_{\mathcal{D}} = 10$, $20$, and $30$ m, the blockage probability of PASS-NOMA networks is getting much larger. This phenomenon indicates that while a larger communication radius enhances the flexibility of coverage, more sophisticated pinching antenna activation scheme or power control strategies are needed to maintain reliable performance. As a further development, 
Fig.~\ref{Fig4. OP-PASS-NOMA-diff-K} plots the blockage probability of node $n$ and node $f$ versus $\rho$ for different number of pinching antennas, with $R_{\mathcal{D}} = 10$ m and ${\hat{R}}_n = {\hat{R}}_f = 1$ BPCU. A pivotal observation is that with an increase in the number of pinching antennas, i.e., $K = 5$, $10$, and $20$, the PASS-NOMA networks is not capable of providing the lower blockage probability. The main reason for this phenomenon is that the transmit power allocated to each pinching antenna decreases proportionally as $P_b/K$. We also observe that the reduction of transmit power allocated at each pinching antenna has a more pronounced impact on the node $f$, since it relies heavily on sufficient power allocation to overcome unfavorable propagation conditions. These observations reveal that while more pinching antennas offer greater spatial flexibility and scalability, they impose stricter limitations on the available power of each pinching antenna. Hence the trade-off between the number of pinching antennas and the amount of power allocated to each pinching antenna should be taken into consideration in the future works.

\begin{figure}[t!]
\centering
 \includegraphics[width= 3.0in, height=2.2in]{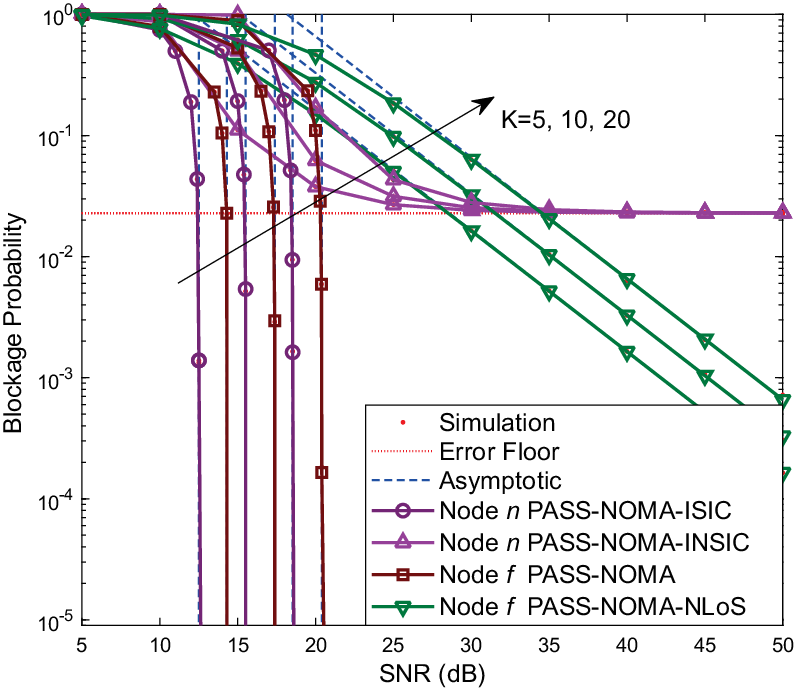}
 \caption{The blockage probability versus $\rho$, with different number of pinching antennas $K$ from 5 to 20.}
\label{Fig4. OP-PASS-NOMA-diff-K}
\end{figure}
\subsection{Ergodic Data Rate}

\begin{figure}[t!]
\centering
 \includegraphics[width= 3.0in, height=2.2in]{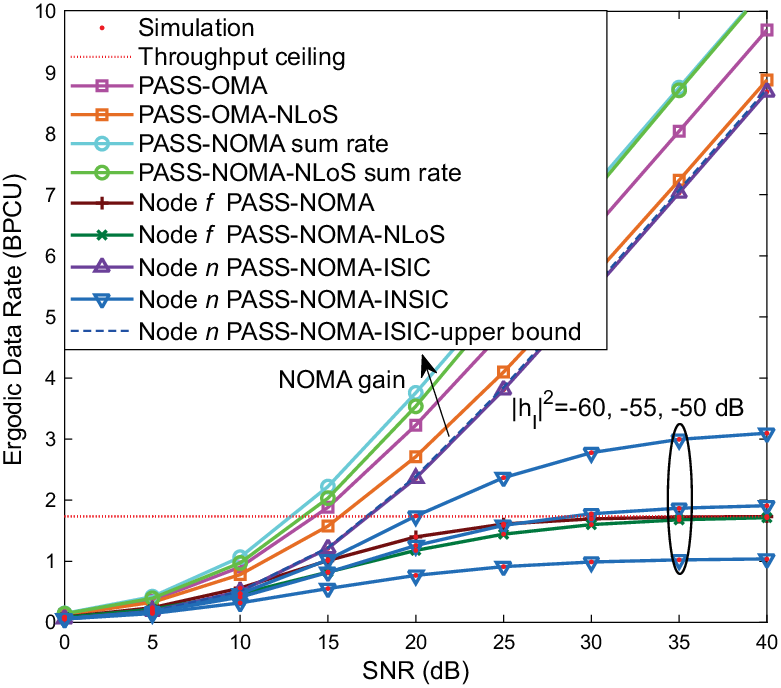}
 \caption{The ergodic data rate versus $\rho$, with $R_{\mathcal{D}} = 10$ m, $K = 10$, and ${\hat{R}}_n = {\hat{R}}_f = 1$ BPCU.}
\label{Fig5. ER-PASS-NOMA-OMA}
\end{figure}

Fig.~\ref{Fig5. ER-PASS-NOMA-OMA} plots the ergodic data rates versus $\rho$, with $R_{\mathcal{D}} = 10$ m, $K = 10$, and ${\hat{R}}_n = {\hat{R}}_f = 1$ BPCU. The purple upper triangle and blue lower triangle solid curves representing the ergodic data rates of node $n$ under ISIC/NISIC are generated using \eqref{ER of user n with ipSIC} and \eqref{ER of user n with pSIC}, respectively. The dark red cross-shaped and green cross-shaped solid curves for ergodic data rates of node $f$ over LoS/NLoS propagation links are obtained from \eqref{ER of user f} and \eqref{ER of user f with NLoS}, respectively. The simulation curves completely coincide with the theoretical analysis results. The asymptotic curves of the ergodic data rates for node $n$ under ISIC and node $f$ over LoS/NLoS propagation links are plotted in accordance with \eqref{asym ER of user n with pSIC}, \eqref{asym ER of user f} and \eqref{asym ER of user f with NLoS}, separately. The asymptotic curves match the theoretical curves in the high SNR region. 
It can be seen that the ergodic data rates of non-orthogonal nodes are lower than that of the orthogonal node. This is because that the total power resources are devoted to a single node. The ergodic data rate curves of node $n$ under NISIC converge to a throughput ceiling and as the leftover interfere increases, its ergodic performance gets more worse. In the considered setup, the OMA benchmark serves a single node using the full time–frequency resource, whereas PASS-NOMA networks serve two nodes over the same resource via power-domain multiplexing. For fair comparison, the sum ergodic data rate of the PASS-NOMA networks is compared with that of the PASS-NOMA networks. As shown in the figure, PASS-NOMA networks achieve a higher sum ergodic data rate than PASS-OMA networks across all SNRs, with the gain increasing at high SNRs, confirming the superior spectral efficiency of the proposed architecture.

\begin{figure}[t!]
\centering
 \includegraphics[width= 3.0in, height=2.2in]{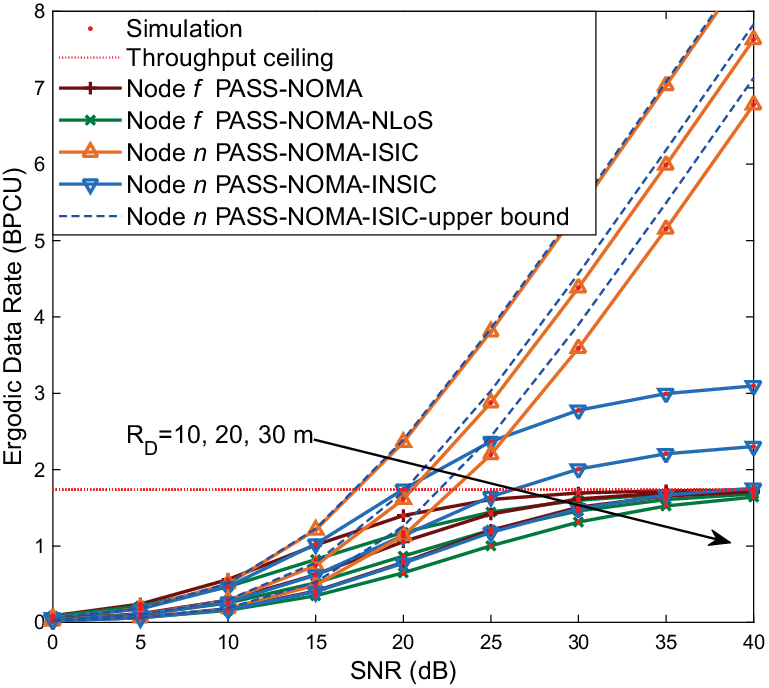}
 \caption{The ergodic data rate versus $\rho$, with different communication region radius $R_{\mathcal{D}}$ from 10 m to 30 m.}
\label{Fig6. ER-PASS-NOMA-diff-R}
\end{figure}
\begin{figure}[t!]
\centering
 \includegraphics[width= 3.0in, height=2.2in]{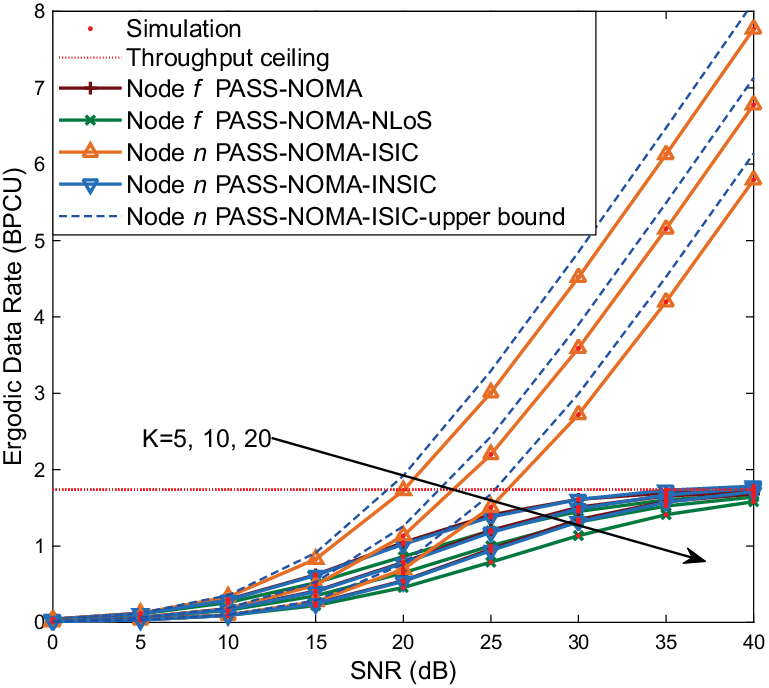}
 \caption{The ergodic data rate versus $\rho$, with different number of pinching antennas $K$ from 5 to 20.}
\label{Fig7. ER-PASS-NOMA-diff-K}
\end{figure}

Fig.~\ref{Fig6. ER-PASS-NOMA-diff-R} plots the ergodic data rate of node $n$ and node $f$ versus $\rho$ for different communication region radii, with $K = 10$, ${\hat{R}}_n = {\hat{R}}_f = 1$ BPCU. It can be observed that as $R_{\mathcal{D}}$ increases, the ergodic data rates of PASS-NOMA networks decrease substantially due to the increase in path loss and the weakened average received SNR. 
We also observe that the ergodic performance of node $f$ exhibits more noticeable degradation. The reason for this phenomenon is that node $f$
is susceptible to path occlusion and propagation loss. These simulation results indicate that it is important to highlight the coupling relationship between communication coverage and the ergodic data rates of PASS-NOMA networks. Furthermore, Fig.~\ref{Fig7. ER-PASS-NOMA-diff-K} plots the ergodic data rate of node $n$ and node $f$  versus $\rho$ for different number of pinching antennas, with $R_{\mathcal{D}} = 10$ m, ${\hat{R}}_n = {\hat{R}}_f = 1$ BPCU. The central observation is that increasing the number $K$ of pinching antennas results in a monotonic reduction in ergodic data rates in PASS-NOMA networks. This is due that the transmit power at each pinching antenna decreases as $P_b/K$.
and diminishing the effective received SNR. It is essential to properly tune the number of antennas based on practical communication requirements.

\begin{figure}[t!]
\centering
 \includegraphics[width= 3.0in, height=2.2in]{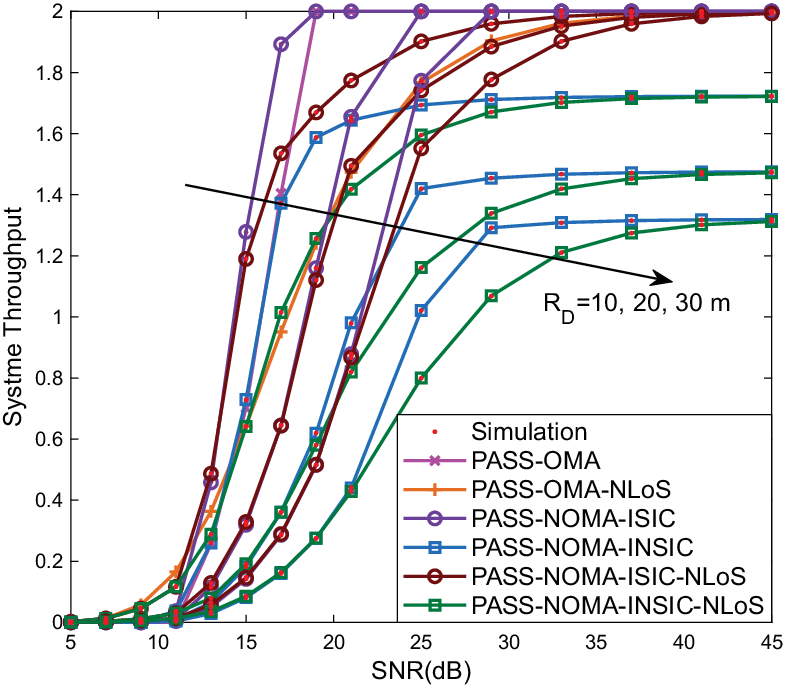}
 \caption{The delay-constrained system throughput versus $\rho$, with $R_{\mathcal{D}} = 10, 20, 30$ m, $K = 10$, and ${\hat{R}}_n = {\hat{R}}_f = 1$ BPCU.}
\label{Fig8. ST-PASS-NOMA}
\end{figure}
\begin{figure}[t!]
\centering
 \includegraphics[width= 3.0in, height=2.2in]{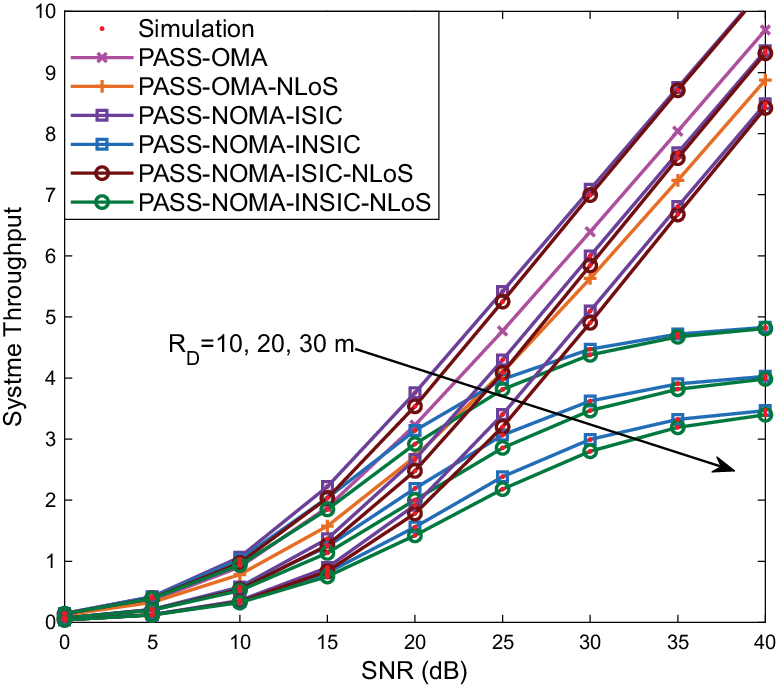}
 \caption{The latency-tolerant system throughput versus $\rho$, with $R_{\mathcal{D}} = 10, 20, 30$ m, $K = 10$, and ${\hat{R}}_n = {\hat{R}}_f = 1$ BPCU.}
\label{Fig9. ST-PASS-NOMA}
\end{figure}

\subsection{System Throughput}
Fig.~\ref{Fig8. ST-PASS-NOMA} plots the delay-constrained system throughput versus $\rho$, with $K = 10$, and ${\hat{R}}_n = {\hat{R}}_f = 1$ BPCU. The delay-constrained system throughput curves of PASS-NOMA networks are plotted based \eqref{Definition System Throughput}. One observation can be obtained from the figure that the PASS-NOMA networks system throughput under ISIC outperforms that of PASS-OMA, but 
underperforms PASS-OMA under NISIC conditions. With increasing of communication region radii, i.e., $R_{\mathcal{D}} = 10$, $20$, and $30$ m,
the PASS-NOMA networks experience noticeable throughput degradation because larger coverage areas incur greater path loss and amplify blockage probability under stringent latency constraints. In addition, Fig.~\ref{Fig9. ST-PASS-NOMA} plots the latency-tolerant system throughput versus $\rho$, with $K = 10$, and ${\hat{R}}_n = {\hat{R}}_f = 1$ BPCU.   The latency-tolerant system throughput curves of PASS-NOMA networks are plotted based \eqref{Definition System Throughput 2}. As the coverage radius increases, the throughput ceiling becomes increasingly limited by both path attenuation and the inherent ceiling imposed by PASS-NOMA power allocation strategy. These observations highlight that although latency-tolerant transmission relaxes blockage constraints, the overall system throughput remains fundamentally limited by spatial coverage and multiplexing structure.

\section{Conclusion}\label{Conclusion}
This paper has examined the performance of PASS-NOMA networks over LoS/NLoS propagation links with randomly distributed nodes. 
The blockage probability and ergodic data rate with ISIC/NISIC in the closed-form and approximate formulations were derived in PASS-NOMA networks.
Based on these analytical results, the diversity gains and slopes of a couple of nodes over LoS/NLoS propagation links have been attained.
Both delay-constrained and latency-tolerant system throughput were further discussed to capture performance under different latency requirements. 
Numerical results have validated the reliability of the analytical derivations and demonstrated that PASS-NOMA networks outperforms PASS-OMA in system reliability and multi-node performance under equal power conditions. 


\appendices
\section*{Appendix~A} \label{Appendix:A}
\renewcommand{\theequation}{A.\arabic{equation}}
\setcounter{equation}{0}
Upon plugging \eqref{SINR n to f} and \eqref{SINR n} into \eqref{OP of user n defined}, after a simple transfer calculation and let $C_n^{\mathrm{NISIC}}(x) = \frac{{\eta \rho {a_n}}}{{{\gamma _{thn}}\left( {\rho x + 1} \right)}}$, and $X = {{\left| {{h_I}} \right|}^2}$, the blockage probability expression of node $n$ with NISIC can be written as
\begin{align}
{\rm{P}}_n^{{\rm{ipSCI}}}(\rho ) &= {\rm{P}}\left[ {{x_n^2} + y_n^2 > C_n^{\rm{NISIC}}(x) - {d^2}} \right] \nonumber \\
&= 1-F_{r^2_n}\left[ C_n^{\rm{NISIC}}(x) - {d^2} \right],
\end{align}
where $F_{r^2_n}(\cdot)$ denote the CDF of the quare of distance from node $n$ to origin $\left(0,0\right)$. Since the PDF of node's location, i.e., $f_{{\mathbf{u}}_{\varphi}} (\cdot) = \frac{1}{\pi R_{\varphi}^2}$,  the corresponding CDF of distance from node $\varphi$ to origin can be calculated as 
\begin{align}
F_{r_{\varphi}}(r) = \int_{0}^{2 \pi} {\int_{0}^{r} {\frac{1}{\pi R_{\varphi}^2} \rho \rm{d\rho} }} 
=\frac{r^2}{R_{\varphi}^2},~{0 \le r \le {R_{\varphi}^2}}.
\end{align}
After that, by employing $F_Y(y)=F_Z(\sqrt{y})$, where $Y=r^2_{\varphi}$, and $Z=r_{\varphi} > 0$, the CDF of square of distance from node $\varphi$ to origin can be finally given by
\begin{align}\label{The CDF of quare of distance}
F_{r^2_{\varphi}}(y) = \begin{cases}
0, & {y<0}, \\
\frac{y}{R^2_{\varphi}}, & {0 \le y < {R^2_{\varphi}}}, \\
1, & {y \ge {R^2_{\varphi}}}.
\end{cases}
\end{align}

Upon plugging $C_n^{\mathrm{NISIC}}(x)$, and \eqref{The CDF of quare of distance} into the blockage probability expression of node $n$ with NISIC, we can achieve the elementary derivation formula as 
\begin{align}\label{Appendx A A_4}
&{\rm{P}}_n^{{\rm{NISIC}}}(\rho ) = \int_0^\infty  {{\mathrm{P}}_n^{{\rm{NISIC}}}\left[\rho ,X(x)\right]{f_{|{h_I}{|^2}}}(x)dx} \nonumber \\ 
&= \begin{cases}
     1, &{C_n^{\mathrm{NISIC}}\left( 0 \right)} - d^2 < 0, \\
     {J_{A1}} + {J_{A3}}, &0 < C_n^{\mathrm{NISIC}}\left( 0 \right) - d^2 < R_n^2, \\
     {J_{A2}} + {J_{A3}} , &C_n^{\mathrm{NISIC}}\left( 0 \right) - d^2 > R_n^2,
   \end{cases}
\end{align}
where 
${J_{A1}} = \int_0^{\frac{{\eta {a_n}}}{{{\gamma _{thn}}{d^2}}} - \frac{1}{\rho }} {(1 - \frac{{\eta \rho {a_n} - {\gamma _{thn}}\left( {\rho x + 1} \right){d^2}}}{{{\gamma _{thn}}\left( {\rho x + 1} \right)R_n^2}}){f_{|{h_I}{|^2}}}(x)dx} $,
${J_{A2}} = {\int_{\frac{{\eta {a_n}}}{{{\gamma _{thn}}(R_n^2 + {d^2})}} - \frac{1}{\rho }}^{\frac{{\eta {a_n}}}{{{\gamma _{thn}}{d^2}}} - \frac{1}{\rho }} {(1 - \frac{{\eta \rho {a_n} - {\gamma _{thn}}\left( {\rho x + 1} \right){d^2}}}{{{\gamma _{thn}}\left( {\rho x + 1} \right)R_n^2}}){f_{|{h_I}{|^2}}}(x)dx} }$,
${J_{A3}} = \int_{\frac{{\eta {a_n}}}{{{\gamma _{thn}}{d^2}}} - \frac{1}{\rho }}^\infty  {{f_{|{h_I}{|^2}}}(x)dx} $. And for convenience, let $C_n = C_n^{\mathrm{NISIC}}\left( 0 \right)=\frac{{\eta \rho {a_n}}}{{{\gamma _{thn}}}}$. Due to $h_I$ is modeled as Rayleigh distribution, its modulus squared CDF is ${F_{|{h_I}{|^2}}}\left( x \right) = 1 - e^{-\frac{x}{\Omega_I}}$. The formula ${J_{A3}}$ is easily acquired that ${J_{A3}} = 1 - {F_{|{h_I}{|^2}}}\left[ {\frac{{\eta {a_n}}}{{{\gamma _{thn}}{d^2}}} - \frac{1}{\rho }} \right] = {e^{ - \frac{1}{{{\Omega _I}}}\left( {\frac{{\eta {a_n}}}{{{\gamma _{thn}}{d^2}}} - \frac{1}{\rho }} \right)}}$. By conducting some basic mathematical calculations, formula ${J_{A1}}$ can be written as
\begin{align}\label{Appendx A A_5}
{J_{A1}} =& \left( {1 + \frac{{{d^2}}}{{R_n^2}}} \right) {{F_{|{h_I}{|^2}}}\left( {\frac{{\eta {a_n}}}{{{\gamma _{thn}}{d^2}}} - \frac{1}{\rho }} \right)} \nonumber \\
& - \underbrace{\int_0^{\frac{{\eta {a_n}}}{{{\gamma _{\mathrm{thn}}}{d^2}}} - \frac{1}{\rho }} {\frac{{\eta \rho {a_n}}}{{{\gamma _{thn}}\left( {\rho x + 1} \right)R_n^2}}\frac{1}{{{\Omega _I}}}{e^{ - {\textstyle{1 \over {{\Omega _I}}}}x}}dx}}_{J_{A12}}.
\end{align}
It can be observed that the formula ${J_{A12}}$ is difficult to obtain analytical solutions. By using the Gauss-Chebyshev quadrature \cite[Eq. (8.8.12)]{Hildebrand1987introduction}, the definite integral of above formula can be transfer to
\begin{align}
{J_{A12}} = \frac{{\eta {a_n}}}{{{\Omega _I}{\gamma _{thn}}R_n^2}}{e^{\frac{1}{{{\Omega _I}\rho }}}}\left[ {{\rm{Ei}}\left( { - \frac{{{\xi _1}}}{{{\Omega _I}}}} \right) - {\rm{Ei}}\left( { - \frac{1}{{{\Omega _I}\rho }}} \right)} \right],
\end{align}
where ${\xi _1} = \frac{{\eta {a_n}}}{{{\gamma _{thn}}{d^2}}}$ and ${\rm{Ei}}\left( \cdot \right)$ is the exponential integral function\cite[Eq. (8.211.1)]{2000gradshteyn}. Similarly, by using the Gauss-Chebyshev quadrature \cite[Eq. (8.8.12)]{Hildebrand1987introduction} on the formula ${J_{A2}}$ after some simple mathematical calculations, it can be written as
\begin{align}\label{Appendx A A_7}
{J_{A2}} =& \left( {1 + \frac{{{d^2}}}{{R_n^2}}} \right){e^{\frac{1}{{{\Omega _I}\rho }}}}\left( {{e^{ - \frac{{{\xi _2}}}{{{\Omega _I}}}}} - {e^{ - \frac{{{\xi _1}}}{{{\Omega _I}}}}}} \right) \nonumber \\
&+ \frac{{\eta {a_n}}}{{{\Omega _I}{\gamma _{thn}}R_n^2}}{e^{\frac{1}{{{\Omega _I}\rho }}}}\left[ {{\rm{Ei}}\left( { - \frac{{{\xi _1}}}{{{\Omega _I}}}} \right) - {\rm{Ei}}\left( { - \frac{{{\xi _2}}}{{{\Omega _I}}}} \right)} \right],
\end{align}
where ${\xi _2} = \frac{{\eta {a_n}}}{{{\gamma _{thn}}(R_n^2 + {d^2})}}$. After some basic algebraic operations with plugging $J_{A3}$, \eqref{Appendx A A_5}, and \eqref{Appendx A A_7} into \eqref{Appendx A A_4}, the blockage probability expression \eqref{OP of user n with ipSIC} is obtained at the last. This completes the proof.
\appendices
\section*{Appendix~B} \label{Appendix:B}
\renewcommand{\theequation}{B.\arabic{equation}}
\setcounter{equation}{0}
Upon plugging \eqref{SINR f} into \eqref{OP of user f defined}, after a simple transfer calculation, the blockage probability expression of node $f$ over NLoS propagation links can be further calculated as
\begin{align}
{{\mathrm{P}}_f}\left( {{P_b}} \right) = {\mathrm{P}}\left( {x_f^2 + y_n^2 > \rho \eta {{\left| {{h_f}} \right|}^2}(\frac{{{a_f}}}{{{\gamma _{thf}}}} - {a_n}) - {d^2}} \right).
\end{align}
The CDF of quare of distance from node $f$ to origin can be gain from \eqref{The CDF of quare of distance}, and the PDF of $Y = \left| h_f \right|^2$ which is modeled as Rayleigh distribution can be achieved as $f_{\left| h_f \right|^2}(y) = \frac{1}{\Omega_I} e^{-\frac{y}{\Omega_I}}$. The blockage expression of node $f$ under NLoS channel conditions is presented as 
\begin{align}
{\rm{P}}_f^{\rm{NLos}}\left( \rho  \right) = \int_0^\infty  {{\rm{P}}_f^{\rm{NLos}}\left[ {\rho ,Y\left( y \right)} \right]{f_{{{\left| {{h_f}} \right|}^2}}}} \left( y \right){\rm{dy}},
\end{align}
where ${\rm{P}}_f^{\rm{NLos}}\left[ {\rho ,Y\left( y \right)} \right] = F_{r^2_f}^{\rm{NLoS}}\left[ \rho \eta y (\frac{{{a_f}}}{{{\gamma _{thf}}}} - {a_n}) - {d^2} \right]$, $ y= {{\left| {{h_f}} \right|}^2}$. After that, the integral above can be further expanded to obtain as 
\begin{align}\label{Appendx B B_3}
{\rm{P}}_f^{\rm{NLos}}\left( \rho  \right) =& \underbrace {\int_0^{\delta _1} {\frac{1}{{{\Omega _f}}}{{\rm{e}}^{ - \frac{y}{{{\Omega _y}}}}}{\rm{dy}}} }_{{{\rm{J}}_{B1}}} \nonumber \\ 
& + \underbrace {\int_{\delta _1}^{\delta _2} {\left( {1 - \frac{{\rho \eta y(\frac{{{a_f}}}{{{\gamma _{thf}}}} - {a_n}) - {d^2}}}{{R_f^2}}} \right)\frac{1}{{{\Omega _f}}}{{\rm{e}}^{ - \frac{y}{{{\Omega _y}}}}}{\rm{dy}}} }_{{J_{B2}}} \nonumber \\ 
& + \int_{\delta _2}^\infty  {0 \cdot {f_{{{\left| {{h_f}} \right|}^2}}}\left( y \right){\rm{dy}}},
\end{align}
where ${\delta _1} = \frac{{{d^2}}}{{\rho \eta ({\textstyle{{{a_f}} \over {{\gamma _{thf}}}}} - {a_n})}},{\delta _2} = \frac{{R_f^2 + {d^2}}}{{\rho \eta ({\textstyle{{{a_f}} \over {{\gamma _{thf}}}}} - {a_n})}}$. The formula ${J_{B1}}$ can be directly concluded as ${J_{B1}} = {F_{{{\left| {{h_f}} \right|}^2}}}\left[ {\frac{{{d^2}}}{{\rho \eta ({\textstyle{{{a_f}} \over {{\gamma _{thf}}}}} - {a_n})}}} \right] = 1 - {e^{ - \frac{{{\delta _1}}}{{{\Omega _f}}}}}$. By conducting some basic mathematical calculations, formula ${J_{B2}}$ can be written as 
\begin{align}\label{Appendx B B_4}
{J_{B2}} =& \left( {1 + \frac{{{d^2}}}{{R_f^2}}} \right)\left( {{e^{ - \frac{{{\delta _1}}}{{{\Omega _f}}}}} - {e^{ - \frac{{{\delta _2}}}{{{\Omega _f}}}}}} \right) \nonumber \\ 
& - \underbrace {\int_{{\delta _1}}^{{\delta _2}} {\frac{{\rho \eta y(\frac{{{a_f}}}{{{\gamma _{thf}}}} - {a_n})}}{{R_f^2}}\frac{1}{{{\Omega _f}}}{{\rm{e}}^{ - \frac{y}{{{\Omega _f}}}}}{\rm{dy}}} }_{{J_{B22}}}.
\end{align}
After a simple transfer calculation, the formula $J_{B22}$ can be expressed by
\begin{align}
J_{B22} &= \frac{{\rho \eta }}{{R_f^2}}(\frac{{{a_f}}}{{{\gamma _{thf}}}} - {a_n}) {\int_{{\delta _1}}^{{\delta _2}} {\frac{y}{{{\Omega _f}}}{{\rm{e}}^{ - \frac{y}{{{\Omega _f}}}}}{\rm{dy}}} } \nonumber \\ 
&= \frac{{\rho \eta }}{{R_f^2}}(\frac{{{a_f}}}{{{\gamma _{thf}}}} - {a_n})\left[ {\left( {{\delta _1} + {\Omega _f}} \right){{\rm{e}}^{ - \frac{{{\delta _1}}}{{{\Omega _f}}}}} - \left( {{\delta _2} + {\Omega _f}} \right){{\rm{e}}^{ - \frac{{{\delta _2}}}{{{\Omega _f}}}}}} \right].
\end{align}
Combining formula $J_{B1}$ and \eqref{Appendx B B_4} with $J_{B22}$ into \eqref{Appendx B B_3}, the final result \eqref{OP of user f with NLoS} can be acquired. This completes the proof.

\appendices
\section*{Appendix~C} \label{Appendix:C}
\renewcommand{\theequation}{C.\arabic{equation}}
\setcounter{equation}{0}
The proof can be started upon plugging \eqref{SINR n} into \eqref{Definition Ergodic Rate} under $\varpi = 1$ condition. The CDF $F_{\gamma_n} \left( x \right)$ is defined as 
\begin{align}
F_{\gamma_n} \left( x \right) = {\rm{P}}\left( \gamma_n < x \right).
\end{align} 
It can be found that the CDF is nearly consistent with the blockage probability expression of node $n$ under NISIC. The only difference is that $\gamma_{thn}$ in the formula is replaced with $x$. Accordingly, the ergodic data rate expression can be calculated as 
\begin{align}
&\tilde R_n^{\rm{NISIC}} = \frac{1}{{\ln 2}}\int_0^\infty  {\frac{{1 - {F_{\gamma _n^{\rm{NISIC}}}}\left( x \right)}}{{1 + x}}{\rm{dx}}} \nonumber \\ 
&= \frac{1}{{\ln 2}}\int_0^{\rho {\zeta _1}} \frac{1}{{1 + x}}\left[ 1 - {e^{\frac{1}{{{\Omega _I}\rho }}}}\left\{ \left( {1 + \frac{{{d^2}}}{{R_n^2}}} \right){e^{ - \frac{{{\zeta _1}}}{{x{\Omega _I}}}}} \right. \right. \nonumber \\ 
& \left. \left. - \frac{{{d^2}}}{{R_n^2}}{e^{ - \frac{{{\zeta _2}}}{{x{\Omega _I}}}}}  - \frac{{{\zeta _3}}}{{x{\Omega _I}}}\left[ {{\rm{Ei}}\left( { - \frac{{{\zeta _2}}}{{x{\Omega _I}}}} \right) - {\rm{Ei}}\left( { - \frac{{{\zeta _1}}}{{x{\Omega _I}}}} \right)} \right] \right\} \right]{\rm{dx}}  \nonumber \\  
& + \frac{1}{{\ln 2}}\int_{\rho {\zeta _1}}^{\rho {\zeta _2}} \frac{1}{{1 + x}}\left[  - \frac{{{d^2}}}{{R_n^2}}\left( {1 - {e^{ - \frac{{{\zeta _2}}}{{x{\Omega _I}}} + \frac{1}{{{\Omega _I}\rho }}}}} \right) \right. \nonumber \\ 
& \left. + \frac{{{\zeta _3}}}{{x{\Omega _I}}}{e^{\frac{1}{{{\Omega _I}\rho }}}}\left[ {{\rm{Ei}}\left( { - \frac{{{\zeta _2}}}{{x{\Omega _I}}}} \right) - {\rm{Ei}}\left( { - \frac{1}{{{\Omega _I}\rho }}} \right)} \right] \right]{\rm{dx}}.
\end{align}
The integral calculation formula above is very complex. Let ${\upsilon _1}\left( {\rho ,x} \right) = \frac{1}{{1 + x}}[ 1 - {e^{\frac{1}{{{\Omega _I}\rho }}}}\{ \left( {1 + \frac{{{d^2}}}{{R_n^2}}} \right){e^{ - \frac{{{\zeta _1}}}{{x{\Omega _I}}}}} - \frac{{{d^2}}}{{R_n^2}}{e^{ - \frac{{{\zeta _2}}}{{x{\Omega _I}}}}} - \frac{{{\zeta _3}}}{{x{\Omega _I}}} \left[ {{\rm{Ei}}\left( { - \frac{{{\zeta _2}}}{{x{\Omega _I}}}} \right) - {\rm{Ei}}\left( { - \frac{{{\zeta _1}}}{{x{\Omega _I}}}} \right)} \right] \} ]$, ${\upsilon _2}\left( {\rho ,x} \right) = \frac{1}{{1 + x}}[  - \frac{{{d^2}}}{{R_n^2}}\left( {1 - {e^{ - \frac{{{\zeta _2}}}{{x{\Omega _I}}} + \frac{1}{{{\Omega _I}\rho }}}}} \right) + \frac{{{\zeta _3}}}{{x{\Omega _I}}}{e^{\frac{1}{{{\Omega _I}\rho }}}}[ {\rm{Ei}}\left( { - \frac{{{\zeta _2}}}{{x{\Omega _I}}}} \right) - {\rm{Ei}}\left( { - \frac{1}{{{\Omega _I}\rho }}} \right) ] ]$. After some integral transform methods, the origin integral formulas can be transferred to 
\begin{align}\label{Appendix C C_3}
\tilde R_n^{\rm{NISIC}} = \frac{1}{{\ln 2}}\int_{ - 1}^1 {\frac{1}{{\sqrt {1 - {t^2}} }}\left[ {{\Upsilon _1}\left( {\rho ,t} \right){\rm{ + }}{\Upsilon _2}\left( {\rho ,t} \right)} \right]{\rm{dt}}},
\end{align}
where ${\Upsilon _1}\left( {\rho ,t} \right) = \frac{{\rho {\zeta _1}\sqrt {1 - {t^2}} }}{2}{\upsilon _1}\left( {\rho ,\frac{{\rho {\zeta _1}}}{2}\left( {t + 1} \right)} \right)$, ${\Upsilon _2}\left( {\rho ,t} \right) = \frac{{\rho \left( {{\zeta _2} - {\zeta _1}} \right)\sqrt {1 - {t^2}} }}{2}{\upsilon _2}\left( {\rho ,\frac{{\rho \left( {{\zeta _2} - {\zeta _1}} \right)}}{2}\left( {t + 1} \right) + \rho {\zeta _1}} \right)$. By using the Gauss-Chebyshev quadrature \cite[Eq. (8.8.12)]{Hildebrand1987introduction}, the final ergodic data rate of node $n$ with ipSCI can be obtained to \eqref{ER of user n with ipSIC} upon plugging $\upsilon _1 \left( {\rho ,x} \right)$, $\upsilon _2 \left( {\rho ,x} \right)$, $\Upsilon _1 \left( {\rho ,t} \right)$, and $\Upsilon _2 \left( {\rho ,t} \right)$ into \eqref{Appendix C C_3}. This completes the proof.

\appendices
\section*{Appendix~D} \label{Appendix:D}
\renewcommand{\theequation}{D.\arabic{equation}}
\setcounter{equation}{0}
The proof begins upon plugging \eqref{SINR n} into \eqref{Definition Ergodic Rate} under $\varpi = 0$ condition. Similarly to Appendix C, it is easily observed that the CDF is almost identical to the blockage probability expression of node $n$ under ISIC, and the only difference is that $\gamma_{thn}$ in the blockage probability expression is replaced with $x$. Therefore, the ergodic data rate expression can be given by
\begin{align}
\tilde R_n^{\rm{ISIC}} =& \frac{1}{{\ln 2}}\left[ \int_0^{\frac{{\eta \rho {a_n}}}{{R_n^2 + {d^2}}}} {\frac{{1 - 0}}{{1 + x}}{\rm{dx}}} \right. \nonumber \\ 
& \left. + \int_{\frac{{\eta \rho {a_n}}}{{R_n^2 + {d^2}}}}^{\frac{{\eta \rho {a_n}}}{{{d^2}}}} {\frac{{1 - \left( {1 - \frac{{{D_n} - {d^2}}}{{R_n^2}}} \right)}}{{1 + x}}{\rm{dx}}}  + \int_{\frac{{\eta \rho {a_n}}}{{{d^2}}}}^\infty  {\frac{{1 - 1}}{{1 + x}}{\rm{dx}}}  \right].
\end{align}
Let ${\chi _{n1}} = \frac{{\eta \rho {a_n}}}{{R_n^2 + {d^2}}},{\chi _{n2}} = \frac{{\eta \rho {a_n}}}{{{d^2}}}$. After a series of simple integral operations, the ergodic data rate expression is transferred to 
\begin{align}\label{Appendix D D_2}
\tilde R_n^{\rm{ISIC}} =& \frac{1}{{\ln 2}}\left\{ \ln \left( {1 + {\chi _{n1}}} \right) + \frac{1}{{R_n^2}}\left[ \eta \rho {a_n}\left( \ln \frac{{{\chi _{n2}}}}{{{\chi _{n1}}}} \right. \right. \right. \nonumber \\ 
& \left. \left. \left. - \ln \frac{{1 + {\chi _{n2}}}}{{1 + {\chi _{n1}}}} \right) - {d^2}\ln \frac{{1 + {\chi _{n2}}}}{{1 + {\chi _{n1}}}} \right] \right\}.
\end{align}

Following some basic mathematical transfer calculations with plugging ${\chi _{n1}}$ and ${\chi _{n2}}$ into \eqref{Appendix D D_2}, the ergodic data rate of node $n$ under ISIC is obtained by \eqref{ER of user n with pSIC}. This completes the proof.

\appendices
\section*{Appendix~E} \label{Appendix:E}
\renewcommand{\theequation}{E.\arabic{equation}}
\setcounter{equation}{0}
The proof can begin upon plugging \eqref{SINR f} into \eqref{Definition Ergodic Rate}. It is directly discovered that the CDF is very close to the blockage expression of node $f$ with a only difference that the $\gamma_{thf}$ is substituted by $x$. As a result, the initial ergodic data rate expression of node $f$ can be initially expanded as
\begin{align}\label{Appendix E E_2}
{{\tilde R}_f}\left( \rho  \right) =& \frac{1}{{\ln 2}} \int_0^{{\chi _{f1}}} {\frac{{1 - 0}}{{1 + x}}{\rm{dx}}}  +  \frac{1}{{\ln 2}} \int_{{\chi _{f1}}}^{{\chi _{f2}}} \frac{1}{{1 + x}} \nonumber \\ 
& \times \left[ {1 - \left( {1 + \frac{{\eta \rho {a_n}}+ d^2}{{R_f^2}} - \frac{{\eta \rho {a_f}}}{{R_f^2}}\frac{1}{x}} \right)} \right]{\rm{dx}} \nonumber \\  
&  +  \frac{1}{{\ln 2}} \int_{{\chi _{f2}}}^\infty  {\frac{{1 - 1}}{{1 + x}}{\rm{dx}}}.
\end{align} 
There are three very simple integral expressions above and let ${\chi _{f1}} = \frac{{\eta \rho {a_f}}}{{R_f^2 + {d^2}}},{\chi _{f2}} = \frac{{\eta \rho {a_f}}}{{{d^2}}}$. After a series transferred operations and basic mathematical calculations with plugging ${\chi _{f1}}$ and ${\chi _{f2}}$ into \eqref{Appendix E E_2}, the result is acquired as the ergodic data rate of node $f$ in PASS-NOMA networks by \eqref{ER of user f}. This completes the proof.

\appendices
\section*{Appendix~F} \label{Appendix:F}
\renewcommand{\theequation}{F.\arabic{equation}}
\setcounter{equation}{0}
Upon plugging \eqref{SINR f} into \eqref{Definition Ergodic Rate}, the proof is starts. It is wealthy to notice that the wireless channel between pinching antenna and node $f$ is considered as NLoS channel following Rayleigh distribution. And it is identical to previous appendix, the CDF of node $f$'s SINR with NLoS condition has the similarity structure, and the only difference is that a symbol $\gamma_{thf}$ in the formula is replaced with $x$. The integral expression of ergodic data rate for node $f$ under NLoS condition is represented by
\begin{align}
\tilde R_f^{\rm{NLoS}}\left( \rho  \right) =& \frac{1}{{\ln 2}}\int_0^\infty  \frac{1}{{1 + x}} \frac{{{\Omega _f}\rho \eta ({\textstyle{{{a_f}} \over x}} - {a_n})}}{{R_f^2}} \nonumber \\ 
& \times \left( {{e^{ - \frac{{{d^2}}}{{{\Omega _f}\rho \eta ({\textstyle{{{a_f}} \over x}} - {a_n})}}}} - {e^{ - \frac{{R_f^2 + {d^2}}}{{{\Omega _f}\rho \eta ({\textstyle{{{a_f}} \over x}} - {a_n})}}}}} \right){\rm{dx}}.
\end{align}
Because of the existence of relation $\frac{{{a_f}}}{x} - {a_n} > 0$, and $x>0$, it can be obtain that the upper and lower limits of integration are zero and $\frac{{{a_f}}}{{{a_n}}}$, respectively. By using the Gauss-Chebyshev quadrature \cite[Eq. (8.8.12)]{Hildebrand1987introduction}, the final ergodic data rate of node $f$ with NLoS conditon can be obtained to \eqref{ER of user f with NLoS}. This completes the proof.


\end{document}